\newtheorem{theorem}{Theorem}
\newtheorem{lemma}{Lemma}
\newtheorem{corollary}{Corollary}
\newtheorem{remark}{Remark}  
\def\ScaleIfNeeded{
\ifdim\Gin@nat@width>\linewidth \linewidth \else \Gin@nat@width
\fi } \makeatother
\begin{document}

\title{\Huge{Secure Communication of Active RIS Assisted NOMA Networks}}

\author{Xuehua\ Li,~\IEEEmembership{Member,~IEEE}, Yingjie~Pei, Xinwei\ Yue,~\IEEEmembership{Senior Member,~IEEE}, Yuanwei\ Liu,~\IEEEmembership{Senior Member~IEEE}, and  Zhiguo~Ding,~\IEEEmembership{Fellow, IEEE}

\thanks{This work was supported in part by National Natural Science Foundation of China under Grant 62071052, and in part by the Beijing Natural Science Foundation under Grant L222004. \emph{(Corresponding author: Xinwei Yue.)}}
\thanks{X. Li and X. Yue are with the Key Laboratory of Information and Communication Systems, Ministry of Information Industry and also with the Key Laboratory of Modern Measurement $\&$ Control Technology, Ministry of Education, Beijing Information Science and Technology University, Beijing 100101, China (email: \{lixuehua and xinwei.yue\}@bistu.edu.cn).}
\thanks{Y. Pei is with the School of Information and Communication Engineering, Beijing University of Posts and Telecommunications, Beijing 100876, China (email: yingjie.pei@bupt.edu.cn).}
\thanks{Y. Liu is with the School of Electronic Engineering and Computer Science, Queen Mary University of London, London E1 4NS, U.K. (email: yuanwei.liu@qmul.ac.uk).}
\thanks{Z. Ding is with Department of Electrical Engineering and Computer Science, Khalifa University, Abu Dhabi, UAE. (e-mail: zhiguo.ding@gmail.com).}
}

\maketitle

\begin{abstract}
  As a revolutionary technology, reconfigurable intelligent surface (RIS) has been deemed as an indispensable part of the 6th generation communications due to its inherent ability to regulate the wireless channels. However, passive RIS (PRIS) still suffers from some pressing issues, one of which is that the fading of the entire reflection link is proportional to the product of the distances from the base station to the PRIS and from the PRIS to the users, i.e., the productive attenuation. To tackle this problem, active RIS (ARIS) has been proposed to reconfigure the wireless propagation condition and alleviate the productive attenuation. In this paper, we investigate the physical layer security of the ARIS assisted non-orthogonal multiple access (NOMA) networks with the attendance of external and internal eavesdroppers. To be specific, the closed-form expressions of secrecy outage probability (SOP) and secrecy system throughput are derived by invoking both imperfect successive interference cancellation (ipSIC) and perfect SIC. The secrecy diversity orders of legitimate users are obtained at high signal-to-noise ratios. Numerical results are presented to verify the accuracy of the theoretical expressions and indicate that: i) The SOP of ARIS assisted NOMA networks exceeds that of PRIS-NOMA, ARIS/PRIS-assisted orthogonal multiple access (OMA); ii) Due to the balance between the thermal noise and residual interference, introducing excess reconfigurable elements at ARIS is not helpful to reduce the SOP; and iii) The secrecy throughput performance of ARIS-NOMA networks outperforms that of PRIS-NOMA and ARIS/PRIS-OMA networks.
\end{abstract}
\begin{keywords}
{A}ctive reconfigurable intelligent surface, non-orthogonal multiple access, physical layer security, outage probability.
\end{keywords}
\section{Introduction}
\makeatletter 
\let\myorg@bibitem\bibitem
\def\bibitem#1#2\par{%
  \@ifundefined{bibitem@#1}{%
    \myorg@bibitem{#1}#2\par
  }{
    \begingroup
      \color{\csname bibitem@#1\endcsname}%
      \myorg@bibitem{#1}#2\par
    \endgroup
  }
}

\makeatother 

The considerable demand for high data capacity and low transmission latency of wireless communication has experienced a remarkable proliferation in the past decades \cite{you2021towards,Saad2020vision6G}. To alleviate the pressure brought by the frequent and enormous information exchange, non-orthogonal multiple access (NOMA) has become a hotspot in both academia and industry with its superior spectrum efficiency, massive connection and user fairness \cite{2017YuanweiNOMA,2018XiaomingPDMA,2022XinyueNGMA}. Generally, the primary characteristic of NOMA is that signals are assigned with various power and superimposed together within the same time/frequency/code resource blocks before transmitting \cite{2017Zhiguo5GNOMA,2016ZhiguouserpairNOMA}. In this case, users/servers at the edge of cells are likely to received higher signal-to-noise ratio (SNR) gains and achieve satisfying quality of service \cite{2018XinweiUnifiedNOMA,DehuanWanNakagamiNOMA}. However, NOMA with advanced spectral efficiency cannot guarantee the physical layer security of communication networks since the radio signals are in danger of being individually overheard by passive eavesdroppers (Eve) due to the open characteristics of wireless channels\footnote{Different from the active Eve which can continuously wiretap users' information by contaminating the estimated channel
state information (CSI) and confusing the standard beamforming via active attacks like pilot spoofing attack \cite{2021XiaomingPilotAttackRIS,2021XiliangRIS}, the passive Eve is suppressed and keeps overhearing legitimate signals without actively exchanging information with other nodes. Therefore, the CSI of passive Eve is difficult to acquire since it always remain silent.}, which is known as the external eavesdropping scenario  \cite{20206G,Mag}. Hence, it is critical to evaluate the physical layer security of NOMA networks. The authors of \cite{2018BeixiongTwoWayNOMA} introduced artificial noise to confuse the vicious Eves and achieve secure NOMA communications. Considering the inaccuracy of artificial noise when CSI is unavailable, the authors in \cite{2018DerrickSecureNOMA} utilized the interference between users to perturb the reception at Eves. Further, the authors of \cite{2020XinweiSecureNOMA} analyzed the secrecy performance of users with the consideration of imperfect successive interference cancellation (ipSIC) and perfect SIC (pSIC). Apart from the external eavesdropping scenarios, users are able to act as internal Eves to wiretap others' information from the superimposed signals, which can cause more covert safety hazard since the internal Eves are one-way transparent to the base station (BS) \cite{2020JiafanInternalEve}. The above-mentioned methods can improve the secrecy of NOMA networks at the expense of uncontrollability and high hardware complexity, so there is an urgent need for a low-cost, easy-to-deploy and flexible technology for the private transmission of NOMA networks.

Reconfigurable intelligent surface (RIS) has been recognized as a prospective technology for the next-generation wireless communication \cite{2020RenzoRIS,Basharat2021RIS}. Generally, passive RIS (PRIS) is composed of abundant low-cost reconfigurable elements which can independently adjust the phase shifts of the incident waves and thus realize the flexible control of signals' propagation directions \cite{2021QingqingRISTutorial,2020XidongRISNOMAopt,2022yuanweiMultiUserRISNOMA}. The authors of \cite{2020BeixiongRISOFDM} studied the application of PRIS to wireless networks, which attains the superior achievable rate. In \cite{2020TaoqinRIS}, the authors investigated the deployment of PRIS to realize enhanced ergodic capacity and outage behaviours. The authors of \cite{2020ZhiguoIRSNOMA} considered the influence of PRIS when the reconfigurable emelments are set to coherent/random phase shifts. 
In addition, physical layer security comes to a situation where opportunities and challenges coexist due to the appearance of PRIS \cite{2020QingqingMag}. On one hand, by intelligently adjusting the phase shifts of reconfigurable elements, the power of received signals at legitimate users can be boosted so as to ensure secrecy transmission \cite{2019ZhangRuiPLSRIS,2019XianghaoYuPLSRIS}. On the other hand, PRIS can also act as a malicious jammer with the purpose of making the reflection signal to be destructively superimposed on the direct signal, thereby compromising the SINR of the legitimate user \cite{2020BinLyuEavesRIS}. An artificial noise aided multiple input multiple output (MIMO) sercecy communication framework with eavesdropping PRIS was proposed in \cite{2021AlexandropoulosMIMOEaveRIS,2023AlexandropoulosMIMOEaveRIS}, where nonzero secrecy rates can be guaranteed with/without legitimate PRIS. Given the efficient adaptability between NOMA and PRIS, the authors of \cite{2020TianweiRISNOMA} surveyed the spectral and energy efficiency of PRIS-NOMA networks. As a further advance, the authors of \cite{YuanweiRISNOMAUpDolink} evaluated the outage probability and ergodic rate of PRIS-NOMA networks. In addition, the authors of \cite{2021XidongIRSReAllo} researched the capacity and rate regions of PRIS-NOMA networks with multiple users were studied in detail. Regarding to multiple users, the superiority of PRIS-NOMA networks with multiple ordered users was revealed in \cite{2022XinweiYueRISNOMA}, where the better outage and ergodic performance were achieved than traditional relaying schemes. The authors of \cite{2022ZiweiLiuTWRISNOMA} highlighted the enhancement in two-way NOMA networks by the virtue of PRIS. 

In light of the aforementioned discussions, topics about the secrecy performance of PRIS-NOMA networks enter the consciousness of academia. More especially, the physical layer security of PRIS-NOMA networks was researched in \cite{2020YangliangRISPLS}, where the secrecy outage probability (SOP) expressions of legitimate users were derived. In \cite{2022CaihongPLSRISNOMA,2023YingjiePLSPRISNOMA}, 1-bit coding scheme was utilized as an feasible method to construct the phase shifts to guarantee the secrecy transmission of PRIS-NOMA networks. The authors of \cite{2021QingqingPLSRISNOMA} proposed an artificial noise based optimization strategy to realize the secure communication of PRIS-NOMA networks. Artificial jamming was also harnessed in \cite{2022HanhuiPLSRISNOMA} to guarantee the secure PRIS-NOMA transmission by applying  within both external and internal wiretapping scenarios. In \cite{2022WangweiPLSRISNOMA}, the authors investigated the maximization of sum rate by designing the transmitting beamforming at BS as well as the reflecting vector at PRIS. Further, the authors of \cite{2022zhangzhengPLSRISNOMA} verified the secrecy superiority of multiple distributed PRISs aided NOMA networks with the conditional of equal allocation of reflecting components. The SOP and average secrecy capacity of PRIS-NOMA networks were analyzed in \cite{2022TianweiPLSNOMARIS} over Nakagami-\emph{m} channels. Recently, the authors of \cite{2023ZhangqianRISNOMAHI} investigated the secure transmission of PRIS-NOMA networks by taking into account the hardware impairments at transceiver. A novel two-way training scheme was proposed in \cite{2023BaiTongRISNOMAPLS} to prevent pilot spoofing attack in PRIS-NOMA networks. In addition, the secure communication of dual-functional RIS-aided NOMA with imperfect CSI was surveyed in \cite{2022WangWenRISNOMAPLS}, where secrecy energy efficiency was maximized by jointly optimizing power allocation and active/passive beamforming.

However, due to the existence of product path loss, the negative impact of multiplicative attenuation on cascaded wireless links can not be neglected when only phase shift regulation is employed \cite{2020BjornsonRIS}. To resist the severe channel fading, active RIS (ARIS) is proposed recently to magnify the incident signals by introducing extra reflection-type amplifiers into traditional PRIS circuits \cite{2022ChongwenRIS}. Although this design architecture is similar to the conventional full-duplex (FD) amplify-and-forward (AF) relaying, the FD-AF requires two full time slots to successfully transmit one symbol, whereas the ARIS can finish the process in a single time slot \cite{ntontin2019multi}. In addition, without the effects of multiplicative fading, ARIS can reap significant performance gains with a small number of reconfiguration elements, resulting in remarkable hardware cost savings compared to MIMO \cite{2021KhoshafaARIS}. The authors of \cite{zhang2021active} verified that the total power consumed by ARIS is lower than that of its passive counterpart to achieve the same performance, highlighting the huge potential of ARIS. 
As a further advance, the authors of \cite{2021ChangshengYouARISorPRIS} showed that the best deployment position of ARIS in uplink/downlink was as close to the target receiver as possible. In \cite{CunhuaPanARIS}, two power allocation strategies of ARIS assisted wireless networks were investigated and the performance of ARIS exceeded that of PRIS with sufficient power budget. With the goal of reducing the additional power cost, a hardware structure including a pair of back-to-back placed PRISs with one power amplifier was proposed in \cite{2022TasciRIS}, where the signals received by PRIS-1 is processed via a single amplifier, and then reflected by PRIS-2. For the same purpose, the authors of \cite{2022LinglongSunARIS} further reduce hardware and energy overheads by considering that multiple elements control their phase shifts independently but share a same power amplifier. In terms of secure transmission, the authors in \cite{2022LimengDongARIS} focused on maximizing the secrecy rate with properly designing the reflection matrix at ARIS and beamforming vector at BS.

\subsection{Motivations and Contributions}
The existing treatises have provided a solid foundation in the aspects of PRIS-NOMA networks and their secrecy transmission strategies \cite{2022zhangzhengPLSRISNOMA,2022CaihongPLSRISNOMA}. Little is known, however, about the physical layer security of ARIS-NOMA networks, where the ARIS is regarded as an effective solution to alleviate the productive fading and enhance the reliability of radio channels.  The authors of \cite{2022TianweiPLSNOMARIS} derived the SOP and secrecy capacity expressions to estimate secrecy performance of PRIS-NOMA networks, where the phase of cascaded PRIS channels are coherently well matched. However, the coherent phase shift scheme may cause excessive signalling overhead and is impractical because of the finite resolution of phase shifters at PRIS \cite{ZhiguoSimpleDesign,2020ZhiguoIRSNOMA}. Therefore, the on-off control is adopted as a viable scheme to redesign the phase shifts of ARIS for realizing the secure transmission of ARIS-NOMA networks. In \cite{2022LimengDongARIS}, the authors jointly redesigned the beamforming and reflection coefficients at ARIS to protect the privacy of signals, but only OMA transmission schemes were taken into account. As far as we know, the physical layer security of ARIS-NOMA networks haven't been investigated yet and some critical issues need to be further explored. On the one hand, does the ARIS-NOMA network exhibit security advantages over the PRIS-NOMA and traditional OMA transmission schemes, given the same system power budgets? On the other hand, what is the relationship between the residual interference caused by the ipSIC and the thermal noise generated at ARIS, and how do they affect the secure communication in ARIS-NOMA networks? In this case, we specifically consider the secrecy performance of ARIS-NOMA networks where the superimposed signals are sent to a paired users by the virtue of an ARIS. The adverse influences brought by external and internal Eves are both discussed in detail. Moreover, the closed-form expressions of two pivotal indicators, i.e., secrecy outage behaviours and secrecy throughput, are both obtained by taking into account ipSIC and pSIC. According to the proposed schemes, the main contributions of this manuscript can be summarized as follows:
\begin{enumerate}
  \item We propose an ARIS-NOMA secure communication framework with the presence of external and internal Eves who are attempting to overhear the information of legitimate users. Given the finite resolution of ARIS, a typical random phase shifting design called on-off control scheme is harnessed to dispose the phase shifts. On this basis, the SOP is selected as a key indicator to evaluate the secrecy performance. Furthermore, we also survey the superiority of ARIS-NOMA networks in the secrecy throughput under delay-limited transmission mode.
  \item For the external eavesdropping scenario, we investigate the physical layer security of ARIS-NOMA networks. To be specific, we derive the closed-form and asymptotic expressions of SOP for legitimate users by taking into account ipSIC and pSIC. The proposed ARIS-NOMA networks are capable of achieving the superior secrecy outage behaviours compared to PRIS-NOMA, PRIS-OMA and several conventional relay transmissions involving AF and decode-and-forward (DF) relaying under the condition of the same power budget. We further analyze the secrecy diversity order of each user, which is largely affected by the residential interference from ipSIC and results in the appearance of error floor.
  \item For the internal eavesdropping scenario, we derive the closed-form and asymptotic expressions of SOP for the far user to wiretap near user's signal with the consideration of ipSIC and pSIC in ARIS-NOMA networks. To reap more insights, the secrecy diversity order of the near user is further obtained, which equals \emph{zero}/\emph{one} for ipSIC and pSIC, respectively. We reveal that prominent improvement can be achieved by dividing more power to the near user since it is helpful to enhance its received signal-to-interference-plus-noise ratio (SINR) and suppress the wiretapping ability of the internal Eve.
  \item Numerical results show the progressiveness of the ARIS-NOMA networks and confirm the accuracy of the theoretical analyses. In addition, three valuable insights can be observed: 1) the arbitrary increase in the number of active components and amplification factor is not conducive to enhancing the security of the ARIS-NOMA networks due to the existence of thermal noise; 2) both thermal noise and residual interference can compromise the secure transmission of the ARIS-NOMA networks and there is a mutual constraint between them; and 3) under the condition of ipSIC, the residential interference is more detrimental to the secrecy throughput of PRIS-NOMA than ARIS-NOMA networks, which presents that the latter features advanced robustness.
\end{enumerate}
\subsection{Organization and Notations}
The remainder of this paper is given as follows. The network model and on-off control scheme are proposed in Section \ref{SectionII}. The detailed evaluation of secrecy outage behaviour and secrecy throughput are presented in Section \ref{SectionIII}. Numerical results and analyses are given in Section \ref{SectionIV} followed by conclusions provided in Section \ref{SectionV}.

The primary notations utilized in this paper are explained as follows. $\mathbb{E}\left\{  \cdot  \right\}$ indicates the expectation operation. The cumulative distribution function (CDF) and the probability density function (PDF) with parameter \emph{X} are given by ${F_X}\left(  \cdot  \right)$ and ${f_X}\left(  \cdot  \right)$, respectively. ${{\mathbf{I}}_p }$ refers to a $p  \times p $ identity matrix and ${{\mathbf{1}}_q }$ denotes a $q  \times 1$ all-ones column vector. $ \otimes $ represents Kronecker product.

\section{Network Model}\label{SectionII}
\subsection{Network Descriptions}
Considering an ARIS-assisted NOMA secure communication scenario as illustrated in Fig. 1, where the superposed signal is transmitted from a BS to two non-orthogonal users, i.e., user \emph{f} and user \emph{n}, via the assistance of ARIS in the presence of an Eve. To provide the straightforward analyses, we suppose that BS and users are both equipped with single antenna. Different from the sub-connected ARIS networks displayed in \cite{2022TasciRIS,2022LinglongSunARIS}, we suppose that each ARIS element owns a dedicated active reflection-type amplifier like current-inverting converters \cite{lonvcar2019ultrathin}, asymmetric current mirrors \cite{2012Geoffrey4GHz}, or some integrated circuits \cite{2012KishorAntenna}, for achieving excellent performance. The \emph{M} active reconfigurable elements at ARIS are able to amplify the incident signal to resist the bottleneck of multiplicative fading \cite{CunhuaPanARIS}. The complex channel gain from BS to ARIS and from ARIS to user $\varphi $ and Eve are denoted as ${{\mathbf{h}}_{br}} \in {\mathbb{C}^{M \times 1}}$, ${{\mathbf{h}}_{r\varphi}} \in {\mathbb{C}^{M \times 1}}$ and ${{\mathbf{h}}_{re}} \in {\mathbb{C}^{M \times 1}}$ with $\varphi  \in \{ f,n\} $, respectively. Considering the radio signals will be blocked by obstacles when propagating in the actual urban scenarios, the direct links between BS and users/Eve are supposed to be entirely obstructed. In this case, the wireless links involved in the ARIS-NOMA networks are modeled as Rayleigh fading channels \footnote{Given the random phase shifting process and urban districts with rich scattering components, Rayleigh fading model is utilized in this paper. In the future we will relax this restriction and consider more realistic assumptions, such as Rician and Nakagami-\emph{m} fading channels.}. We assume the CSI of users can be obtained at ARIS, but not the Eve since it always keeps silent.

\begin{figure}[t!]
    \begin{center}
        \includegraphics[width=2.93in,  height=2.2in]{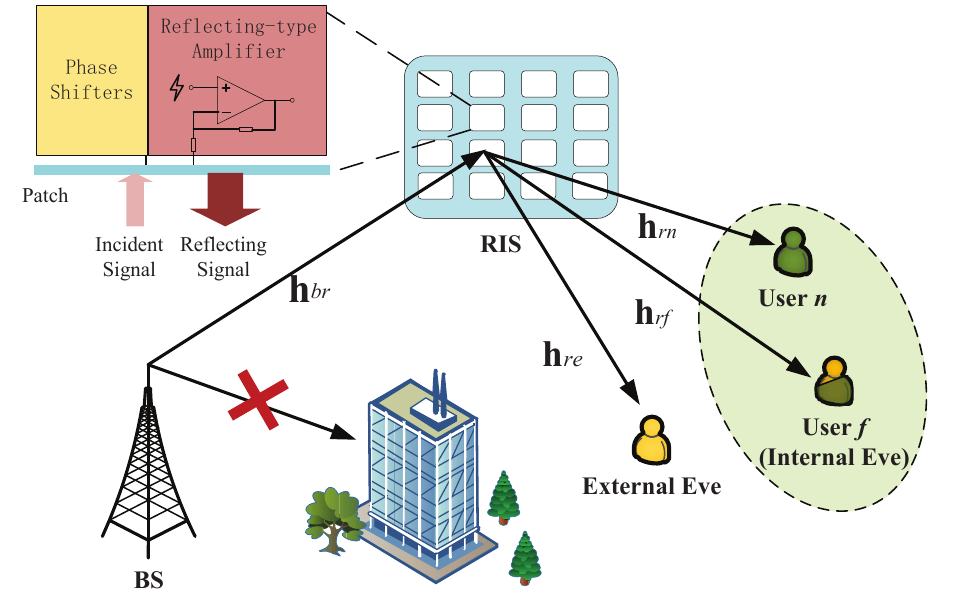}
        \caption{An illustration of ARIS assisted NOMA secure communication networks, where the ARIS consisting of a novel reflecting-type amplifier can simultaneously reflects and amplifies signals to user \emph{n} and user \emph{f} located in the same NOMA cluster.}
        \label{sys_model.eps}
    \end{center}
\end{figure}

\subsection{Signal Model}
In ARIS-NOMA secure communication networks, the superposed signal is transmitted from BS to ARIS, and then to the users after being amplified by the active elements at ARIS. Hence the received signal at user $\varphi$ is shown as
\begin{small}
\begin{align}\label{the received signal at users}
{y_\varphi } = \underbrace {{\mathbf{h}}_{r\varphi }^H{\mathbf{\Theta }}{{\mathbf{h}}_{br}}\left( {\sqrt {P_{BS}^{act}{a_n}} {x_n} + \sqrt {P_{BS}^{act}{a_f}} {x_f}} \right)}_{{\text{Desired signal}}} + \underbrace {{\mathbf{h}}_{r\varphi }^H{\mathbf{\Theta }}{{\mathbf{n}}_t}}_{{\text{Thermal noise}}} + \underbrace {{n_\varphi }}_{{\text{AWGN}}},
\end{align}
\end{small}where the ${P_{BS}^{act}}$ is the transmission power at BS in  ARIS-NOMA networks and ${{x_\varphi}}$ denotes user $\varphi$'s signal with unity power, i.e., $\mathbb{E}\{ {\left| {{x_\varphi}} \right|^2}\}  = 1$. To guarantee user fairness, ${{a_\varphi}}$ represents the power allocation factor of user $\varphi$, which satisfies the relationship ${a_n} + {a_f} = 1$. Denoting ${{\mathbf{h}}_{br}} = {\left[ {h_{br}^1 \cdots h_{br}^m \cdots h_{br}^M} \right]^H}$, where $h_{br}^m = \sqrt {\beta d_{br}^{ - \alpha }} \tilde h_{br}^m$ denotes the complex channel coefficient between BS and the \emph{m}-th active reconfigurable element of ARIS, $\alpha$ indicates the path loss exponent and $\beta$ is the frequency dependent factor, $h_{br}^m \sim \mathcal{C}\mathcal{N}\left( {0,{\Omega_{br}}} \right)$ and $\tilde h_{br}^m \sim \mathcal{C}\mathcal{N}\left( {0,1} \right)$. The ${{\mathbf{h}}_{r\varphi}} = {\left[ {h_{r\varphi}^1 \cdots h_{r\varphi}^m \cdots h_{r\varphi}^M} \right]^H}$, where $h_{r\varphi}^m = \sqrt {\beta d_{r\varphi}^{ - \alpha }} \tilde h_{r\varphi}^m$ denotes the complex channel coefficient between the \emph{m}-th active reconfigurable element of ARIS and user $\varphi$, $h_{r\varphi}^m \sim \mathcal{C}\mathcal{N}\left( {0,{\Omega_{r\varphi}}} \right)$ and $\tilde h_{r\varphi}^m \sim \mathcal{C}\mathcal{N}\left( {0,1} \right)$. The ${d_{br}}$ and ${d_{r\varphi}}$ are the distance between BS and ARIS and between ARIS and user $\varphi$, respectively. The reflection matrix is defined as ${\mathbf{\Theta }} = \kappa diag\{ {e^{j{\theta _1}}}, \cdots ,{e^{j{\theta _m}}}, \cdots ,{e^{j{\theta _M}}}\}  = \kappa {\mathbf{\Phi }}$, where $\theta _m$ denotes the phase shift at the \emph{m}-th reconfigurable element and $\kappa$ is the amplification factor which is always larger than 1 due to the amplifier of ARIS\footnote{The amplification factor of each reflection-type amplifier is set to an identical value $\kappa$ to reduce the configuration overhead, which can also be various for more flexible beamforming design.}. Considering that the ARIS are equipped with additional active components, the thermal noise generated at each reconfigurable element is not negligible, which can be denoted as ${{\mathbf{n}}_t} \sim \mathcal{C}\mathcal{N}\left( {{{\mathbf{0}}_M},\sigma _t^2{{\mathbf{I}}_M}} \right)$ with $\sigma _t^2$ representing the amplification thermal noise power \cite{2021ChangshengYouARISorPRIS,CunhuaPanARIS,zhang2021active}. The ${n_\varphi}$ denotes the additive white Gaussian noise (AWGN) at user $\varphi$ both with mean power parameter ${\sigma ^2}$.

The performance of  ARIS is bound to be superior to that of PRIS without setting limitations on the total system power \cite{CunhuaPanARIS}. For the sake of fairness, the total power budgets are supposed to be the same in both ARIS/PRIS-aided NOMA networks, which can be defined as $P_{tot}^{act} = P_{BS}^{act} + P_{RIS}^{act} + Q\left( {{P_{PS}} + {P_{DC}}} \right)$ and $P_{tot}^{pas} = P_{BS}^{pas} + M{P_{PS}}$, respectively. ${{P_{PS}}}$ denotes the power consumption caused by phase shifters and ${{P_{DC}}}$ represents the direct current biasing power consumed by the amplifier installed in each active reconfigurable element.

According to NOMA protocol, the SIC scheme is introduced at user \emph{n} to first decode another user's information and then decode its own signal. In this case, the received SINR at user \emph{n} can be given by
\begin{align}\label{SINR usern}
{\gamma _n^{ipSIC}} = \frac{{{a_n}P_{BS}^{act}{\kappa ^2}{{\left| {{\mathbf{h}}_{rn}^H{\mathbf{\Phi }}{{\mathbf{h}}_{br}}} \right|}^2}}}{{{\kappa ^2}\sigma _t^2{{\left\| {{\mathbf{h}}_{rn}^H{\mathbf{\Phi }}} \right\|}^2} + \varpi P_{BS}^{act}{{\left| {{h_{ipu}}} \right|}^2} + {\sigma ^2}}},
\end{align}
where $\varpi  \in \left[ {0,1} \right]$   is the residual interference degree for SIC. Specifically, $\varpi  = 0$ and $\varpi  \ne 0$ indicate the switchover between pSIC and ipSIC, respectively. Without loss of generality, the residual interference generated by ipSIC is assumed to be modeled as the Rayleigh fading and the relative complex channel parameter is denoted by ${h_{ipu}} \sim \mathcal{C}\mathcal{N}\left( {0,{\Omega_{ipu}}} \right)$.

As for user \emph{f}, deploying SIC is unnecessary since it has larger power allocation and it can acquire its information directly. As a consequence, the SINR for user \emph{f} to decode its own information is shown as
\begin{align}\label{SINR userf}
{\gamma _f} = \frac{{{a_f}P_{BS}^{act}{\kappa ^2}{{\left| {{\mathbf{h}}_{rf}^H{\mathbf{\Phi }}{{\mathbf{h}}_{br}}} \right|}^2}}}{{{a_n}P_{BS}^{act}{\kappa ^2}{{\left| {{\mathbf{h}}_{rf}^H{\mathbf{\Phi }}{{\mathbf{h}}_{br}}} \right|}^2} + {\kappa ^2}\sigma _t^2{{\left\| {{\mathbf{h}}_{rf}^H{\mathbf{\Phi }}} \right\|}^2} + {\sigma ^2}}}.
\end{align}

Given the complex electromagnetic environment can impose a negative effect on the secure communication of the exposed radio signals, the external and internal wiretapping scenarios are both considered to evaluate the secrecy performance of ARIS-NOMA networks.
\subsubsection{External Wiretapping Scenario}
The broadcasting characteristics of wireless signals provide the external Eve opportunities to wiretap users' information and the received signal at external Eve is shown as
\begin{small}
\begin{align}\label{the received signal at EE}
{y_{EE} } = {\mathbf{h}}_{re }^H{\mathbf{\Phi }}{{\mathbf{h}}_{br}}\left( {\sqrt {P_{BS}^{act}{a_n}} {x_n} + \sqrt {P_{BS}^{act}{a_f}} {x_f}} \right) + {\mathbf{h}}_{re }^H{\mathbf{\Phi }}{{\mathbf{n}}_t} + {n_e},
\end{align}
\end{small}where ${{\mathbf{h}}_{re}} = {\left[ {h_{re}^1 \cdots h_{re}^m \cdots h_{re}^M} \right]^H}$ is the channel from ARIS to external Eve, $h_{re}^m = \sqrt {\beta d_{re}^{ - \alpha }} \tilde h_{re}^m$ denotes the complex channel coefficient between the \emph{m}-th active reconfigurable element of ARIS and Eve, $h_{re}^m \sim \mathcal{C}\mathcal{N}\left( {0,{\Omega_{re}}} \right)$ and $\tilde h_{re}^m \sim \mathcal{C}\mathcal{N}\left( {0,1} \right)$. The $n_e$ denotes the AWGN at Eve. Referring to the analysis above, the external Eve can decode the information of user \emph{n} by applying SIC and that of user \emph{f} directly. Hence, the SINR for external Eve to wiretap the signals of user \emph{n} and user \emph{f} can be given by
\begin{align}\label{the SINR EE n}
{\gamma _{e \to n}^{ipSIC}} = \frac{{{a_n}P_{BS}^{act}{\kappa ^2}{{\left| {{\mathbf{h}}_{re}^H{\mathbf{\Phi }}{{\mathbf{h}}_{br}}} \right|}^2}}}{{{\kappa ^2}\sigma _t^2{{\left\| {{\mathbf{h}}_{re}^H{\mathbf{\Phi }}} \right\|}^2} + \varpi P_{BS}^{act}{{\left| {{h_{ipe}}} \right|}^2} + {\sigma _e}^2}},
\end{align}
and
\begin{align}\label{the SINR EE f}
{\gamma _{e \to f}} = \frac{{{a_f}P_{BS}^{act}{\kappa ^2}{{\left| {{\mathbf{h}}_{re}^H{\mathbf{\Phi }}{{\mathbf{h}}_{br}}} \right|}^2}}}{{{a_n}P_{BS}^{act}{\kappa ^2}{{\left| {{\mathbf{h}}_{re}^H{\mathbf{\Phi }}{{\mathbf{h}}_{br}}} \right|}^2} + {\kappa ^2}\sigma _t^2{{\left\| {{\mathbf{h}}_{re}^H{\mathbf{\Phi }}} \right\|}^2} + {\sigma _e}^2}},
\end{align}
respectively, where ${h_{ipe}} \sim \mathcal{C}\mathcal{N}\left( {0,{\Omega_{ipe}}} \right)$ is the residential interference at Eve.

\subsubsection{Internal Wiretapping Scenario}
In this case, the distant user \emph{f} will be treated as an internal Eve since the channel condition of user \emph{f} is weaker than that of user \emph{n}\footnote{It will be more challenging if the near user is regarded as an internal Eve since it can directly decode the confidential signal of the distant user. One feasible method is to enhance distant user's channel gain with the assistance of beamforming optimization and then switch the order of SIC for the purpose of deteriorating the received SINR at internal Eve \cite{2021ZheZhangInternalEveRISNOMA}. The secrecy performance of ARIS-NOMA networks where the near user act as an attacker will be investigated in our future work.
}. As a result, the received signal at internal Eve can be written as
\begin{small}\emph{}
\begin{align}\label{the received signal at IE}
{y_{IE} } = {\mathbf{h}}_{rf }^H{\mathbf{\Phi }}{{\mathbf{h}}_{br}}\left( {\sqrt {P_{BS}^{act}{a_n}} {x_n} + \sqrt {P_{BS}^{act}{a_f}} {x_f}} \right) + {\mathbf{h}}_{rf }^H{\mathbf{\Phi }}{{\mathbf{n}}_t} + {n_e}.
\end{align}
\end{small}At this moment, the SINR for user \emph{f} to detect user \emph{n}'s information is shown as follows
\begin{align}\label{SINR f decode n}
{\gamma _{f \to n}} = \frac{{{a_n}P_{BS}^{act}{\kappa ^2}{{\left| {{\mathbf{h}}_{rf}^H{\mathbf{\Phi }}{{\mathbf{h}}_{br}}} \right|}^2}}}{{{\kappa ^2}\sigma _t^2{{\left\| {{\mathbf{h}}_{rf}^H{\mathbf{\Phi }}} \right\|}^2} + {\sigma _e}^2}}.
\end{align}

\subsection{ARIS-NOMA Networks with On-off Control Scheme}
Considering the finite resolution and hardware limitations of the actual reconfigurable elements, the on-off control is selected as an appropriate scheme to design the phase shift of elements in ARIS, where each component of ARIS is set to 1 (on) or 0 (off)\cite{2022XinweiYueRISNOMA,ZhiguoSimpleDesign}. Specifically, we assume that \emph{M} = \emph{PQ}, where \emph{P} and \emph{Q} are both positive integers. Defining ${\mathbf{V}} = {{\mathbf{I}}_P} \otimes {{\mathbf{1}}_Q} \in {\mathbb{C}^{M \times P}}$ and ${{\mathbf{v}}_p}$ is a vector denoting the \emph{p}-th column of ${\mathbf{V}}$. On this basis, the aforementioned cascaded channels ${{\mathbf{h}}_{_{r\varphi}}^H{\mathbf{\Phi }}{{\mathbf{h}}_{br}}}$ and ${{\mathbf{h}}_{re}^H{\mathbf{\Phi }}{{\mathbf{h}}_{br}}}$ can be recast as ${\mathbf{v}}_p^H{{\mathbf{D}}_{r\varphi}}{{\mathbf{h}}_{br}}$ and ${\mathbf{v}}_p^H{{\mathbf{D}}_{re}}{{\mathbf{h}}_{br}}$, respectively, where ${{\mathbf{D}}_{r\varphi}}$ and ${{\mathbf{D}}_{re}}$ are both diagonal matrices whose diagonal elements are composed from ${\mathbf{h}}_{r\varphi}^H$ and ${\mathbf{h}}_{re}^H$. By utilizing the on-off control scheme, (\ref{SINR usern}) and (\ref{SINR userf}) can be rewritten as
\begin{align}\label{SINR usern hat}
{{\hat \gamma }_n^{ipSIC}} = \frac{{{a_n}P_{BS}^{act}{\kappa ^2}{{\left| {{\mathbf{v}}_p^H{{\mathbf{D}}_{rn}}{{\mathbf{h}}_{br}}} \right|}^2}}}{{{\kappa ^2}\sigma _t^2{{\left\| {{\mathbf{v}}_p^H{{\mathbf{D}}_{rn}}} \right\|}^2} + \varpi P_{BS}^{act}{{\left| {{h_{ipu}}} \right|}^2} + {\sigma ^2}}},
\end{align}
and
\begin{small}
\begin{align}\label{SINR userf hat}
{{\hat \gamma }_f} = \frac{{{a_f}P_{BS}^{act}{\kappa ^2}{{\left| {{\mathbf{v}}_p^H{{\mathbf{D}}_{rf}}{{\mathbf{h}}_{br}}} \right|}^2}}}{{{a_n}P_{BS}^{act}{\kappa ^2}{{\left| {{\mathbf{v}}_p^H{{\mathbf{D}}_{rf}}{{\mathbf{h}}_{br}}} \right|}^2} + {\kappa ^2}\sigma _t^2{{\left\| {{\mathbf{v}}_p^H{{\mathbf{D}}_{rf}}} \right\|}^2} + {\sigma ^2}}},
\end{align}
\end{small}
respectively. Note that the ARIS with on-off control scheme seems similar to the conventional AF relaying, but they are actually  quite different. On one hand, ARIS reflects the incident signals without radio frequency (RF) components \cite{2019BasarRIS} and it is capable of achieving superior performance under the same power budget as shown in Fig. 3 compared to AF relaying. On the other hand, AF relaying equipped with RF components has to simultaneously combines the received signals and precodes the forwarded signals, resulting in a long latency. In this case, two adjacent timeslots will be occupied for users to decode the signals from the relaying link and direct link with the assistance of timing synchronization technology. Unlike the AF relaying, ARIS cannot process the signals and it provides a reflection path with a nanosecond-scale latency \cite{2019BasarRIS}, which means users can receive the signals carrying the same symbol from the direct link and ARIS-aided link in one timeslot \cite{2020RenzoRISvsRelay}.

\begin{lemma} \label{Lemma1}
By utilizing on-off control scheme, the CDF of SINR for user n to decode its own signal with ipSIC can be given by
\begin{small}
\begin{align}\label{CDF SINR n ipsic}
{F_{\hat \gamma _n^{ipSIC}}}\left( x \right) \approx 1 - \frac{2}{{\Gamma \left( Q \right)}}\sum\limits_{d = 1}^D {{G_d}{{\left( {{\Xi _n}x} \right)}^{\frac{Q}{2}}}} {K_Q}\left( {2\sqrt {{\Xi _n}x} } \right),
\end{align}
\end{small}where ${c_n} = {a_n}P_{BS}^{act}{\kappa ^2}$, ${\Xi _n} = \frac{{{v_n} + {\varpi}P_{BS}^{act}{\Omega _{ipu}}{\zeta _d}}}{{{c_n}{\Omega _{br}}{\Omega _{rn}}}}$, ${v_n} = {\kappa ^2}\sigma _t^2Q{\Omega _{rn}} + {\sigma ^2}$, ${G_d} = \frac{{{{\left( {D!} \right)}^2}}}{{{{\zeta _d}}{{\left[ {L_{_D}^\prime \left( {{\zeta _d}} \right)} \right]}^2}}}$ and ${{\zeta _d}}$ represent the weight of Gauss-Laguerre quadrature formula and the d-th zero point of Laguerre polynomial ${L_D}\left( {{\zeta _d}} \right)$ with d = 1,2,3, $\cdots$ ,D, respectively. D presents a complexity accuracy tradeoff parameter and the equal sign in (\ref{CDF SINR n ipsic}) can be established when D approaches infinity.
\end{lemma}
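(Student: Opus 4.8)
The plan is to compute the CDF by separating the randomness of the cascaded forward channel from that of the residual-interference term, and then to tame the two resulting integrals with a Bessel-function identity and a Gauss--Laguerre quadrature, respectively. First I would fix the active block $\mathbf{v}_p$ and observe that the thermal-noise amplification term $\kappa^2\sigma_t^2\|\mathbf{v}_p^H\mathbf{D}_{rn}\|^2=\kappa^2\sigma_t^2\sum_{m}|h_{rn}^m|^2$ is a sum of $Q$ i.i.d.\ exponentials; replacing it by its mean $\kappa^2\sigma_t^2 Q\Omega_{rn}$ collapses the denominator noise floor to the deterministic constant $v_n$ and, crucially, decouples the numerator from the denominator, which otherwise share the vector $\mathbf{h}_{rn}$. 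After this step the SINR reads $\hat\gamma_n^{ipSIC}\approx c_n G/(v_n+\varpi P_{BS}^{act}|h_{ipu}|^2)$ with $G=|\mathbf{v}_p^H\mathbf{D}_{rn}\mathbf{h}_{br}|^2$.

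Next I would derive the statistics of the product channel gain $G$. Conditioning on $\mathbf{h}_{rn}$, the quantity $\mathbf{v}_p^H\mathbf{D}_{rn}\mathbf{h}_{br}$ is a linear combination of the independent entries of $\mathbf{h}_{br}$ and is therefore conditionally $\mathcal{CN}(0,\Omega_{br}W)$ with $W=\sum_{m}|h_{rn}^m|^2$; hence $G\mid W$ is exponential with mean $\Omega_{br}W$, while $W$ is $\mathrm{Gamma}(Q,\Omega_{rn})$. Integrating the conditional CDF $1-e^{-g/(\Omega_{br}w)}$ against the Gamma density and invoking the identity $\int_0^\infty w^{\nu-1}e^{-a/w-bw}\,dw=2(a/b)^{\nu/2}K_\nu(2\sqrt{ab})$ yields the closed form $F_G(g)=1-\frac{2}{\Gamma(Q)}(g/(\Omega_{br}\Omega_{rn}))^{Q/2}K_Q(2\sqrt{g/(\Omega_{br}\Omega_{rn})})$.

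It then remains to average over the residual interference. Writing $|h_{ipu}|^2\sim\mathrm{Exp}(\Omega_{ipu})$ and conditioning, I would express the CDF as $\int_0^\infty F_G\big(x(v_n+\varpi P_{BS}^{act}\Omega_{ipu}u)/c_n\big)e^{-u}\,du$ after the substitution $|h_{ipu}|^2=\Omega_{ipu}u$. This integral has no elementary closed form, so I would apply the $D$-point Gauss--Laguerre rule $\int_0^\infty e^{-u}f(u)\,du\approx\sum_{d=1}^{D}G_d f(\zeta_d)$; evaluating the argument of $F_G$ at the node $\zeta_d$ reproduces exactly $\Xi_n x$, and collecting the leading constant via $\sum_{d=1}^{D}G_d=1$ (Gauss--Laguerre being exact for $f\equiv1$) gives (\ref{CDF SINR n ipsic}).

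The main obstacle I anticipate is the coupling between numerator and denominator through the common channel $\mathbf{h}_{rn}$: an exact joint treatment would require integrating a ratio in which both the conditional variance of $G$ and the noise term are driven by the same Gamma variable $W$, which does not admit a tractable Bessel form. The mean-value surrogate for $\|\mathbf{v}_p^H\mathbf{D}_{rn}\|^2$ is precisely what removes this coupling and is the source of one layer of approximation; the Gauss--Laguerre truncation supplies the other, and only the latter vanishes as $D\to\infty$, consistent with the stated remark. I would also take care that the Bessel integral identity is applied with the correct index $\nu=Q$ rather than $Q-1$, since the order of the resulting $K_Q$ is what fixes the diversity behaviour analysed later.
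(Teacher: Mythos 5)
Your proposal is correct and follows the same overall architecture as the paper's proof in Appendix~A: replace the thermal-noise term $\kappa^2\sigma_t^2\big\|\mathbf{v}_p^H\mathbf{D}_{rn}\big\|^2$ by its mean $\kappa^2\sigma_t^2 Q\Omega_{rn}$, obtain the cascaded-channel CDF $F_Z(z)=1-\tfrac{2}{\Gamma(Q)}\left(z/(\Omega_{br}\Omega_{rn})\right)^{Q/2}K_Q\!\left(2\sqrt{z/(\Omega_{br}\Omega_{rn})}\right)$, and then average over the exponential residual-interference variable with a $D$-point Gauss--Laguerre rule. The one substantive difference is how the cascaded-channel law is established: the paper imports the PDF of the Rayleigh-product sum from the literature and integrates it via Gradshteyn--Ryzhik (6.561.8), whereas you derive the CDF from first principles by conditioning on $W=\sum_m |h_{rn}^m|^2$, so that $G\mid W$ is exponential with mean $\Omega_{br}W$ while $W$ is $\mathrm{Gamma}(Q,\Omega_{rn})$, and then applying $\int_0^\infty w^{\nu-1}e^{-a/w-bw}\,dw=2(a/b)^{\nu/2}K_\nu\!\left(2\sqrt{ab}\right)$ with $\nu=Q$; both routes give the identical $F_Z$, and yours has the merit of being self-contained and exact for that step. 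Your handling of the outer average also matches the paper's (the paper pulls the constant $1$ out of the integral exactly, which is equivalent to your use of $\sum_{d=1}^D G_d=1$). Finally, your closing caveat is a correct sharpening of the lemma's statement: the mean-value surrogate for the thermal-noise term---which is also what decouples the numerator and denominator, both driven by $\mathbf{h}_{rn}$---is a separate layer of approximation that does not vanish as $D\to\infty$, so only the Gauss--Laguerre error disappears in that limit; the paper's claim that equality holds as $D\to\infty$ quietly ignores this.
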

\begin{proof}
Please refer to Appendix A.
\end{proof}
When $\varpi  = 0$, the CDF of SINR for user \emph{n} to decode its own signal with pSIC can be given by
\begin{small}
\begin{align}\label{CDF SINR n psic}
{F_{\hat \gamma _n^{pSIC}}}\left( x \right) = 1 - \frac{2}{{\Gamma \left( Q \right)}}{\left( {\frac{{{v_n}x}}{{{c_n}{\Omega _{br}}{\Omega _{rn}}}}} \right)^{\frac{Q}{2}}}{K_Q}\left( {2\sqrt {\frac{{{v_n}x}}{{{c_n}{\Omega _{br}}{\Omega _{rn}}}}} } \right).
\end{align}
\end{small}
\begin{lemma} \label{Lemma2}
By utilizing on-off control scheme, the CDF of SINR for user f to decode its own signal can be given by
\begin{small}
\begin{align}\label{CDF SINR f}
{F_{{\hat \gamma _f}}}\left( x \right) = 1 - \frac{2}{{\Gamma \left( Q \right)}}{\left( {\frac{{x{\Xi _f}}}{{{c_f} - x{c_n}}}} \right)^{\frac{Q}{2}}}{K_Q}\left( {2\sqrt {\frac{{x{\Xi _f}}}{{{c_f} - x{c_n}}}} } \right),
\end{align}
\end{small}where ${c_f} = {a_f}P_{BS}^{act}{\kappa ^2}$, ${\Xi _f} = \frac{{{v_f}}}{{{\Omega _{br}}{\Omega _{rf}}}}$ and ${v_f} = {\kappa ^2}\sigma _t^2Q{\Omega _{rf}} + {\sigma ^2}$.
\end{lemma}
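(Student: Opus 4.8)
The plan is to reduce the CDF of $\hat{\gamma}_f$ in \eqref{SINR userf hat} to a single evaluation of the CDF of the scalar cascaded-channel gain $X \triangleq |\mathbf{v}_p^H \mathbf{D}_{rf}\mathbf{h}_{br}|^2$, and then to reuse the $K_Q$-type distribution that already underlies Lemma \ref{Lemma1} (established in Appendix A). First I would deal with the thermal-noise term in the denominator of \eqref{SINR userf hat}: the norm $\|\mathbf{v}_p^H \mathbf{D}_{rf}\|^2 = \sum_{m} |h_{rf}^m|^2$ is a sum of $Q$ active entries, and I would replace it by its expectation $Q\Omega_{rf}$. This renders the thermal-plus-AWGN contribution deterministic and equal to $v_f = \kappa^2\sigma_t^2 Q\Omega_{rf}+\sigma^2$, which is exactly the constant appearing in the statement; it also decouples that term from the numerator so that only the single random variable $X$ remains.

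With this substitution $\hat{\gamma}_f = c_f X/(c_n X + v_f)$, and the next step is the SINR-to-CDF inversion. Writing $F_{\hat{\gamma}_f}(x)=\Pr(\hat{\gamma}_f<x)$ and clearing the denominator gives the equivalent event $(c_f - x c_n)X < x v_f$. The sign bookkeeping here is the one point that differs from the user-$n$/pSIC case in \eqref{CDF SINR n psic}: for $x < c_f/c_n = a_f/a_n$ the coefficient $c_f-xc_n$ is positive, so the event becomes $X < x v_f/(c_f-xc_n)$ and $F_{\hat{\gamma}_f}(x)=F_X\!\big(x v_f/(c_f-xc_n)\big)$; for $x \ge a_f/a_n$ the coefficient is non-positive and the inequality holds for every $X\ge 0$, so the CDF saturates at $1$. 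Hence the closed form in \eqref{CDF SINR f} is understood on the admissible range $x<a_f/a_n$, which is precisely where the argument of $K_Q$ is real and positive.

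It then remains to insert the marginal CDF of $X$. Conditioned on the BS--ARIS block, $\mathbf{v}_p^H \mathbf{D}_{rf}\mathbf{h}_{br}=\sum_{m}(h_{rf}^m)^* h_{br}^m$ is zero-mean complex Gaussian with variance $\Omega_{rf}\sum_{m}|h_{br}^m|^2$, so $X$ is conditionally exponential, while $\sum_{m}|h_{br}^m|^2$ is $\mathrm{Gamma}(Q,\Omega_{br})$; averaging the conditional exponential over this Gamma mixture yields $F_X(y)=1-\tfrac{2}{\Gamma(Q)}\big(y/(\Omega_{br}\Omega_{rf})\big)^{Q/2}K_Q\big(2\sqrt{y/(\Omega_{br}\Omega_{rf})}\big)$, the same Bessel-$K$ form derived for Lemma \ref{Lemma1} with $\Omega_{rn}$ replaced by $\Omega_{rf}$ and with the ipSIC term absent. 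Substituting $y = x v_f/(c_f - x c_n)$ and using $\Xi_f = v_f/(\Omega_{br}\Omega_{rf})$ collapses $y/(\Omega_{br}\Omega_{rf})$ into $x\Xi_f/(c_f-xc_n)$, which is exactly \eqref{CDF SINR f}.

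The main obstacle is really the first step rather than the last: the thermal-noise norm $\|\mathbf{v}_p^H\mathbf{D}_{rf}\|^2$ and the cascaded gain $X$ share the same channel coefficients $\{h_{rf}^m\}$ and are therefore statistically correlated, so an exact joint treatment would not factor into the clean $K_Q$ form. Replacing the norm by its mean $Q\Omega_{rf}$ is the device that removes this coupling; I would justify it by noting that for moderate-to-large $Q$ the normalized norm concentrates around $\Omega_{rf}$, so the resulting expression is essentially exact. Everything after this decoupling---the SINR inversion and the Gamma-mixture integral producing $K_Q$---is standard and mirrors the computation already carried out in Appendix A for Lemma \ref{Lemma1}.
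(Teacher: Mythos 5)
Your proposal is correct and follows essentially the same route as the paper (which handles Lemma~\ref{Lemma2} with the machinery of Appendix~A): replace the thermal-noise norm $\|\mathbf{v}_p^H\mathbf{D}_{rf}\|^2$ by its mean $Q\Omega_{rf}$, invert the SINR to the event $X < x v_f/(c_f - x c_n)$ valid for $x < c_f/c_n$, and substitute into the Bessel-$K$ CDF of the cascaded channel, exactly as in (\ref{a4}). The only cosmetic difference is that you derive that cascaded-channel CDF from scratch via a conditional Gaussian--Gamma mixture, whereas the paper quotes the cascaded-Rayleigh PDF from a reference and integrates it using \cite[Eq.~(6.561.8)]{gradvstejn2000table}; both give the identical $K_Q$ form, so your argument matches the paper's.
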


\section{Secrecy Performance Evaluation}\label{SectionIII}
In this section, the SOP is selected as a pivotal index to investigate the secrecy performance of  ARIS-NOMA networks with the existence of external and internal Eves. Secrecy diversity order is also acquired in the high SNR region to highlight the approximate secrecy features.
\subsection{Statistical Properties for Eavesdropping Channels}
In order to evaluate the impact of Eve on secure communication, statistical properties for eavesdropping channels are derived in the following.
\subsubsection{External Wiretapping Scenario}
According to the principle of on-off control scheme, (\ref{the SINR EE n}) and (\ref{the SINR EE f}) can be transformed into
\begin{align}\label{the SINR EE n new}
{\hat \gamma _{e \to n}^{ipSIC}} = \frac{{{a_n}P_{BS}^{act}{\kappa ^2}{{\left| {{\mathbf{v}}_p^H{{\mathbf{D}}_{re}}{{\mathbf{h}}_{br}}} \right|}^2}}}{{{\kappa ^2}\sigma _t^2{{\left\| {{\mathbf{v}}_p^H{{\mathbf{D}}_{re}}} \right\|}^2} + \varpi P_{BS}^{act}{{\left| {{h_{ipe}}} \right|}^2} + {\sigma _e}^2}},
\end{align}
and
\begin{align}\label{the SINR EE f new}
{\hat \gamma _{e \to f}} = \frac{{{a_f}P_{BS}^{act}{\kappa ^2}{{\left| {{\mathbf{v}}_p^H{{\mathbf{D}}_{re}}{{\mathbf{h}}_{br}}} \right|}^2}}}{{{a_n}P_{BS}^{act}{\kappa ^2}{{\left| {{\mathbf{v}}_p^H{{\mathbf{D}}_{re}}{{\mathbf{h}}_{br}}} \right|}^2} + {\kappa ^2}\sigma _t^2{{\left\| {{\mathbf{v}}_p^H{{\mathbf{D}}_{re}}} \right\|}^2} + {\sigma _e}^2}},
\end{align}respectively.

\begin{lemma} \label{Lemma3}
By utilizing on-off control scheme, the PDF of SINR for the external Eve to wiretap user n's signal with ipSIC can be given by (\ref{PDF SINR EE decode n ipSIC}) which is shown on the top of next page, where ${\Xi _{e1}} = \frac{{{v_{e1}} + {\varpi}P_{BS}^{act}{\Omega _{ipe}}{\zeta _d}}}{{{c_n}{\Omega _{br}}{\Omega _{re}}}}$ and ${v_{e1}} = {\kappa ^2}\sigma _t^2Q{\Omega _{re}} + \sigma _e^2$.
\begin{figure*}[!t]
\begin{small}
\begin{align}\label{PDF SINR EE decode n ipSIC}
{f_{\hat \gamma _{e \to n}^{ipSIC}}}\left( x \right) = \frac{1}{{\Gamma \left( Q \right)}}\left\langle {\sum\nolimits_{d = 1}^D {{G_d}} {{\left( {{\Xi _{e1}}x} \right)}^{\frac{Q}{2}}}\left\{ { - \frac{Q}{x}{K_Q}\left( {2\sqrt {{\Xi _{e1}}x} } \right) + \sqrt {\frac{{{\Xi _{e1}}}}{x}} \left[ {{K_{Q - 1}}\left( {2\sqrt {{\Xi _{e1}}x} } \right) + {K_{Q + 1}}\left( {2\sqrt {{\Xi _{e1}}x} } \right)} \right]} \right\}} \right\rangle .
\end{align}
\end{small}
\begin{small}
\begin{align}\label{PDF SINR EE decode f}
{f_{{\gamma _{e \to f}}}}\left( x \right) \approx \frac{{{c_f}p\left( x \right)}}{{\Gamma \left( Q \right)\left( {{c_f} - x{c_n}} \right)x}}\left\langle {{{\left[ {p\left( x \right)} \right]}^{\frac{{Q - 1}}{2}}}\left\{ {{K_{Q - 1}}\left[ {2\sqrt {p\left( x \right)} } \right] + {K_{Q + 1}}\left[ {2\sqrt {p\left( x \right)} } \right]} \right\} - Q{{\left[ {p\left( x \right)} \right]}^{\frac{Q}{2} - 1}}{K_Q}\left[ {2\sqrt {p\left( x \right)} } \right]} \right\rangle .
\end{align}
\end{small}
\hrulefill \vspace*{0pt}
\end{figure*}
\end{lemma}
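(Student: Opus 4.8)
The plan is to obtain the PDF in (\ref{PDF SINR EE decode n ipSIC}) by first writing down the CDF of $\hat\gamma_{e\to n}^{ipSIC}$ and then differentiating it. The key observation is that the eavesdropper SINR in (\ref{the SINR EE n new}) is structurally identical to the legitimate SINR $\hat\gamma_n^{ipSIC}$ in (\ref{SINR usern hat}): the numerator is again $c_n|\mathbf{v}_p^H\mathbf{D}_{re}\mathbf{h}_{br}|^2$ and the denominator again consists of the scaled thermal-noise term $\kappa^2\sigma_t^2\|\mathbf{v}_p^H\mathbf{D}_{re}\|^2$, the ipSIC residual-interference term and the AWGN, with only the receive channel gain $\Omega_{rn}\to\Omega_{re}$, the residual-interference gain $\Omega_{ipu}\to\Omega_{ipe}$ and the noise power $\sigma^2\to\sigma_e^2$ relabelled. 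Hence the derivation of Lemma~\ref{Lemma1} in Appendix A applies verbatim after these substitutions: replacing $\kappa^2\sigma_t^2\|\mathbf{v}_p^H\mathbf{D}_{re}\|^2$ by its mean $\kappa^2\sigma_t^2 Q\Omega_{re}$ (so that $v_{e1}=\kappa^2\sigma_t^2 Q\Omega_{re}+\sigma_e^2$), conditioning on the residual-interference power $|h_{ipe}|^2\sim\exp(\Omega_{ipe})$, using the integral $\int_0^\infty s^{Q-1}e^{-a/s-bs}\,ds=2(a/b)^{Q/2}K_Q(2\sqrt{ab})$ for the cascaded channel, and resolving the remaining integral over $|h_{ipe}|^2$ by Gauss--Laguerre quadrature yields
\begin{align*}
F_{\hat\gamma_{e\to n}^{ipSIC}}(x)\approx 1-\frac{2}{\Gamma(Q)}\sum_{d=1}^D G_d\left(\Xi_{e1}x\right)^{\frac{Q}{2}}K_Q\left(2\sqrt{\Xi_{e1}x}\right),
\end{align*}
with $\Xi_{e1}$ and $v_{e1}$ as in the statement.

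Next I would differentiate this CDF term by term, since $f(x)=\tfrac{d}{dx}F(x)=-\tfrac{2}{\Gamma(Q)}\sum_d G_d\tfrac{d}{dx}\big[(\Xi_{e1}x)^{Q/2}K_Q(2\sqrt{\Xi_{e1}x})\big]$. Introducing $t=2\sqrt{\Xi_{e1}x}$, so that $(\Xi_{e1}x)^{Q/2}=(t/2)^Q$ and $\tfrac{dt}{dx}=\sqrt{\Xi_{e1}/x}$, the product rule together with the standard modified-Bessel derivative identity $K_Q'(t)=-\tfrac12[K_{Q-1}(t)+K_{Q+1}(t)]$ gives
\begin{align*}
\frac{d}{dt}\left[\left(\frac{t}{2}\right)^Q K_Q(t)\right]=\left(\frac{t}{2}\right)^Q\left[\frac{Q}{t}K_Q(t)-\frac{1}{2}\left(K_{Q-1}(t)+K_{Q+1}(t)\right)\right].
\end{align*}
Multiplying by $\tfrac{dt}{dx}=\sqrt{\Xi_{e1}/x}$ and using $\tfrac{Q}{t}\sqrt{\Xi_{e1}/x}=\tfrac{Q}{2x}$ collapses the first term into $\tfrac{Q}{2x}K_Q$, while the leading factor $-\tfrac{2}{\Gamma(Q)}$ flips the overall sign; collecting terms then reproduces (\ref{PDF SINR EE decode n ipSIC}) exactly.

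The calculation is essentially routine, so the only genuine effort is bookkeeping, and the main obstacle I anticipate is the chain rule: because both the Bessel argument $2\sqrt{\Xi_{e1}x}$ and the prefactor $(\Xi_{e1}x)^{Q/2}$ depend on $x$, a naive differentiation produces three Bessel terms ($K_{Q-1}$, $K_Q$, $K_{Q+1}$) whose coefficients must be simplified via $\tfrac{Q}{t}\sqrt{\Xi_{e1}/x}=\tfrac{Q}{2x}$ to match the stated form. A secondary subtlety is that, although (\ref{PDF SINR EE decode n ipSIC}) is written with an equality, it inherits the Gauss--Laguerre approximation from the CDF, so the identity is exact only in the limit $D\to\infty$; I would flag this when stating the lemma. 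Finally, the companion PDF (\ref{PDF SINR EE decode f}) for $\gamma_{e\to f}$ would follow from precisely the same differentiation applied to the CDF of $\hat\gamma_{e\to f}$ obtained from (\ref{the SINR EE f new}), so no new machinery is needed there.
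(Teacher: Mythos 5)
Your proposal is correct and follows essentially the same route as the paper's own proof: the paper likewise obtains the CDF of $\hat \gamma _{e \to n}^{ipSIC}$ by conditioning on $|h_{ipe}|^2$, reusing the cascaded-Rayleigh CDF and Gauss--Laguerre quadrature from Appendix A with the relabelled parameters ($\Omega_{re}$, $\Omega_{ipe}$, $\sigma_e^2$), and then differentiating with respect to $x$. The only difference is that you spell out the Bessel-derivative bookkeeping via $K_Q'(t)=-\tfrac12\left[K_{Q-1}(t)+K_{Q+1}(t)\right]$, which the paper leaves implicit, and you correctly flag the $D\to\infty$ approximation that the paper's equality sign glosses over.
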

\begin{proof}
\emph{We assume that $Z = {\left| {{\mathbf{v}}_p^H{{\mathbf{D}}_{re}}{{\mathbf{h}}_{br}}} \right|^2}$ and the CDF of $\hat \gamma _{e \to n}^{ipSIC}$ can be written as
\begin{align}
{F_{\hat \gamma _{e \to n}^{ipSIC}}}\left( x \right) &= {\rm{Pr}}\left( {\frac{{{c_n}{Z}}}{{{\varpi}P_{BS}^{act}{{\left| {{h_{ipe}}} \right|}^2} + {v_{e1}}}} < x} \right)\notag \\ &= {\rm{Pr}}\left[ {{Z} < \frac{{x\left( {{\varpi}P_{BS}^{act}y + {v_{e1}}} \right)}}{{{c_n}}}} \right]\notag \\ &= \int_0^\infty  {{f_{{{\left| {{h_{ipe}}} \right|}^2}}}\left( y \right){F_{{Z}}}\left[ {\frac{{x\left( {{\varpi}P_{BS}^{act}y + {v_{e1}}} \right)}}{{{c_n}}}} \right]} dy,
\end{align}where ${f_{{{\left| {{h_{ipe}}} \right|}^2}}}\left( y \right) = \frac{1}{{{\Omega _{ipe}}}}{e^{ - y/{\Omega _{ipe}}}}$.
Based on (\ref{a4}), the CDF of cascaded Rayleigh channels can be given as
\begin{align}
{F_Z}\left( z \right) =  1 - \frac{2}{{\Gamma \left( Q \right)}}{\left( {\frac{z}{{{\Omega _{br}}{\Omega _{rn}}}}} \right)^{\frac{Q}{2}}}{K_Q}\left( {2\sqrt {\frac{z}{{{\Omega _{br}}{\Omega _{rn}}}}} } \right).
\end{align}
Referring to (\ref{a5})-(\ref{a6}) in Appendix A, the CDF of $\hat \gamma _{e \to n}^{ipSIC}$ is shown as
\begin{align}
{F_{\hat \gamma _{e \to n}^{ipSIC}}}\left( x \right) = 1 - \frac{2}{{\Gamma \left( Q \right)}}\sum\limits_{d = 1}^D {{G_d}{{\left( {{\Xi _e}x} \right)}^{\frac{Q}{2}}}{K_Q}\left( {2\sqrt {{\Xi _e}x} } \right)} .
\end{align}
By taking the derivative of \emph{x}, we can obtain (\ref{PDF SINR EE decode n ipSIC}) and the proof is completed.}
\end{proof}
When $\varpi $=0, the PDF of SINR for the external Eve to wiretap user \emph{n}'s signal with pSIC is shown as
\begin{normalsize}
\begin{align}\label{PDF SINR EE decode n pSIC}
{f_{\hat \gamma _{e \to n}^{pSIC}}}\left( x \right) =& \frac{1}{{\Gamma \left( Q \right)}}\Xi _{e2}^{\frac{Q}{2}}{x^{\frac{{Q - 1}}{2}}}\left\{ {\Xi _{e2}^{^{\frac{1}{2}}}} \right.\left[ {{K_{Q - 1}}\left( {2\sqrt {{\Xi _{e2}}x} } \right)} \right. \notag \\ &+ \left. {\left. {{K_{Q + 1}}\left( {2\sqrt {{\Xi _{e2}}x} } \right)} \right] - \frac{Q}{{\sqrt x }}{K_Q}\left( {2\sqrt {{\Xi _{e2}}x} } \right)} \right\},
\end{align}
\end{normalsize}where ${\Xi _{e2}} = \frac{{{v_{e1}}}}{{{c_n}{\Omega _{br}}{\Omega _{re}}}}$.
\begin{lemma} \label{Lemma4}
By utilizing on-off control scheme, the PDF of SINR for the external Eve to wiretap user f's signal can be given by (\ref{PDF SINR EE decode f}) as shown on the top of next page, where $p\left( x \right) = \frac{{{\Xi _{e3}}x}}{{{c_f} - x{c_n}}}$ and ${\Xi _{e3}} = \frac{{{v_{e1}}}}{{{\Omega _{br}}{\Omega _{re}}}}$.
\end{lemma}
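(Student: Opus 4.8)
The plan is to reduce the ratio in (\ref{the SINR EE f new}) to a monotone function of the single cascaded gain $Z = \left|{\mathbf{v}}_p^H{{\mathbf{D}}_{re}}{{\mathbf{h}}_{br}}\right|^2$ and then transform the known CDF of $Z$. First I would replace the thermal-noise norm $\kappa^2\sigma_t^2\left\|{\mathbf{v}}_p^H{{\mathbf{D}}_{re}}\right\|^2$ by its mean $\kappa^2\sigma_t^2 Q\Omega_{re}$, so that the denominator becomes the constant $v_{e1}=\kappa^2\sigma_t^2 Q\Omega_{re}+\sigma_e^2$ plus the self-interference term $c_n Z$; this averaging of the correlated norm is exactly what makes the statement approximate. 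With $c_f=a_fP_{BS}^{act}\kappa^2$ and $c_n=a_nP_{BS}^{act}\kappa^2$, the wiretap SINR becomes $\hat\gamma_{e\to f}=c_f Z/(c_n Z+v_{e1})$, which increases monotonically in $Z$ and saturates at $c_f/c_n=a_f/a_n$.

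Next, for $0\le x<a_f/a_n$ I would invert the event $\{\hat\gamma_{e\to f}<x\}$ into $\{Z<xv_{e1}/(c_f-xc_n)\}$, giving $F_{\hat\gamma_{e\to f}}(x)=F_Z\!\left(xv_{e1}/(c_f-xc_n)\right)$. Substituting the cascaded-Rayleigh CDF established in the proof of Lemma \ref{Lemma3}, namely $F_Z(z)=1-\frac{2}{\Gamma(Q)}\left(z/(\Omega_{br}\Omega_{re})\right)^{Q/2}K_Q\!\left(2\sqrt{z/(\Omega_{br}\Omega_{re})}\right)$, and introducing $p(x)=\Xi_{e3}x/(c_f-xc_n)$ with $\Xi_{e3}=v_{e1}/(\Omega_{br}\Omega_{re})$ collapses the entire argument into $p(x)$, so that $F_{\hat\gamma_{e\to f}}(x)=1-\frac{2}{\Gamma(Q)}\left[p(x)\right]^{Q/2}K_Q\!\left(2\sqrt{p(x)}\right)$.

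Finally I would differentiate in $x$ via the chain rule, writing $f_{\hat\gamma_{e\to f}}(x)=-\frac{2}{\Gamma(Q)}\,p'(x)\,\frac{d}{du}\big[u^{Q/2}K_Q(2\sqrt{u})\big]\big|_{u=p(x)}$. The inner derivative is handled with the product rule together with the Bessel recurrence $K_Q'(z)=-\tfrac12[K_{Q-1}(z)+K_{Q+1}(z)]$, which produces the $\left[p(x)\right]^{(Q-1)/2}\{K_{Q-1}+K_{Q+1}\}$ term set against the $Q\left[p(x)\right]^{Q/2-1}K_Q$ term appearing in (\ref{PDF SINR EE decode f}). A short computation gives $p'(x)=c_f\Xi_{e3}/(c_f-xc_n)^2$, and the last step is to recognize the algebraic identity $p'(x)/\Gamma(Q)=c_f p(x)/[\Gamma(Q)(c_f-xc_n)x]$, which reproduces the stated prefactor exactly. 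The main obstacle is bookkeeping in this differentiation --- correctly applying the Bessel recurrence so that both $K_{Q\pm1}$ contributions survive and matching the $p'(x)$ factor to the compact prefactor --- rather than any conceptual difficulty; the only genuine modelling approximation is the mean-value replacement of the thermal-noise norm made in the first step.
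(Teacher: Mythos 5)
Your proof is correct and follows essentially the same route the paper takes (the paper omits an explicit proof of Lemma~\ref{Lemma4}, but its pattern is exactly this): form the CDF of $\hat\gamma_{e\to f}$ by combining the cascaded-Rayleigh CDF of $Z$ with the mean-value replacement $\kappa^2\sigma_t^2\left\|{\mathbf{v}}_p^H{{\mathbf{D}}_{re}}\right\|^2 \approx \kappa^2\sigma_t^2 Q\Omega_{re}$, which yields the Lemma~\ref{Lemma2}-type expression with $\Xi_{e3}$ in place of $\Xi_f$, and then differentiate using $K_Q'(z)=-\tfrac12\left[K_{Q-1}(z)+K_{Q+1}(z)\right]$, just as the paper does in the proof of Lemma~\ref{Lemma3}. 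Your verification that $p'(x)=c_f\Xi_{e3}/(c_f-xc_n)^2$ equals the stated prefactor $c_f p(x)/[(c_f-xc_n)x]$, and your observation that the inversion is only valid for $x<c_f/c_n$, correctly fill in the bookkeeping the paper leaves implicit.
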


\subsubsection{Internal Wiretapping Scenario}
In the case of internal wiretapping scenario, the far user \emph{f} is treated as internal Eve since it has the worst condition. According to the principle of on-off control scheme, (\ref{SINR f decode n}) can be transformed into
\begin{align}\label{SINR f decode n hat}
{{\hat \gamma }_{f \to n}} = \frac{{{a_n}P_{BS}^{act}{\kappa ^2}{{\left| {{\mathbf{v}}_p^H{{\mathbf{D}}_{rf}}{{\mathbf{h}}_{br}}} \right|}^2}}}{{{\kappa ^2}\sigma _t^2{{\left\| {{\mathbf{v}}_p^H{{\mathbf{D}}_{rf}}} \right\|}^2} + {\sigma _e}^2}}.
\end{align}

\begin{lemma} \label{Lemma5}
By utilizing on-off control scheme, the PDF of SINR for user f to wiretap user n's signal can be given by
\begin{normalsize}
\begin{align}\label{PDF SINR f decode n}
{f_{{{\hat \gamma }_{f \to n}}}}\left( x \right) =& \frac{1}{{\Gamma \left( Q \right)}}\Xi _{e4}^{\frac{{Q + 1}}{2}}{x^{\frac{{Q - 1}}{2}}}\left\{ {\left[ {{K_{Q - 1}}\left( {2\sqrt {{\Xi _{e4}}x} } \right)} \right.} \right.\notag \\ &\left. { + {K_{Q + 1}}\left( {2\sqrt {{\Xi _{e4}}x} } \right)} \right] - Q{\left( {{\Xi _{e4}}x} \right)^{ - \frac{1}{2}}}\notag \\ &\left. { \times {K_Q}\left( {2\sqrt {{\Xi _{e4}}x} } \right)} \right\},
\end{align}
\end{normalsize}where ${\Xi _{e4}} = \frac{{{v_{e2}}}}{{{c_n}{\Omega _{br}}{\Omega _{rf}}}}$ and ${v_{e2}} = {\kappa ^2}\sigma _t^2Q{\Omega _{rf}} + \sigma _e^2$.
\end{lemma}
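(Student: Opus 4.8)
The plan is to reduce $\hat\gamma_{f\to n}$ in (\ref{SINR f decode n hat}) to a deterministic scaling of a single cascaded-channel gain, and then transform its known CDF. Unlike the external ipSIC case of Lemma \ref{Lemma3}, the internal scenario carries no residual-interference term, so the only source of randomness in the denominator is the norm $\|\mathbf{v}_p^H\mathbf{D}_{rf}\|^2=\sum_m |h_{rf}^m|^2$. Following the same treatment that produced $v_n$, $v_f$ and $v_{e1}$ in the earlier lemmas, I would replace this norm by its mean $Q\Omega_{rf}$, so that the denominator collapses to the constant $v_{e2}=\kappa^2\sigma_t^2 Q\Omega_{rf}+\sigma_e^2$. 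With $c_n=a_n P_{BS}^{act}\kappa^2$ and $Z=|\mathbf{v}_p^H\mathbf{D}_{rf}\mathbf{h}_{br}|^2$, this renders $\hat\gamma_{f\to n}=c_n Z/v_{e2}$ a monotone increasing function of the scalar variable $Z$.

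Next I would borrow the cascaded Rayleigh CDF established in the proof of Lemma \ref{Lemma3}, namely $F_Z(z)=1-\frac{2}{\Gamma(Q)}\bigl(z/(\Omega_{br}\Omega_{rf})\bigr)^{Q/2}K_Q\bigl(2\sqrt{z/(\Omega_{br}\Omega_{rf})}\bigr)$, now evaluated with the ARIS-to-user-$f$ variances. Because $\hat\gamma_{f\to n}$ is a pure scaling of $Z$, its CDF is simply $F_{\hat\gamma_{f\to n}}(x)=F_Z(v_{e2}x/c_n)$. Substituting $z=v_{e2}x/c_n$ absorbs all constants into $\Xi_{e4}=v_{e2}/(c_n\Omega_{br}\Omega_{rf})$, giving $F_{\hat\gamma_{f\to n}}(x)=1-\frac{2}{\Gamma(Q)}(\Xi_{e4}x)^{Q/2}K_Q(2\sqrt{\Xi_{e4}x})$.

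The last step is to differentiate this CDF in $x$. Writing $w=\sqrt{\Xi_{e4}x}$ so that the Bessel argument is $2w$, I would apply the chain rule with $dw/dx=\tfrac12\sqrt{\Xi_{e4}/x}$ together with the modified-Bessel derivative identity $K_Q'(t)=-\tfrac12[K_{Q-1}(t)+K_{Q+1}(t)]$. Collecting the two resulting terms and factoring out $\Xi_{e4}^{(Q+1)/2}x^{(Q-1)/2}/\Gamma(Q)$ reproduces (\ref{PDF SINR f decode n}) exactly: the combination $K_{Q-1}+K_{Q+1}$ arises from differentiating the Bessel factor, while the isolated $-Q(\Xi_{e4}x)^{-1/2}K_Q$ term arises from differentiating the algebraic prefactor $(\Xi_{e4}x)^{Q/2}$.

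The routine but error-prone part is the Bessel bookkeeping in this final differentiation, i.e. keeping the powers of $\Xi_{e4}$ and $x$ consistent after the chain rule and recurrence substitution. The genuinely substantive modeling step, however, is the replacement of $\|\mathbf{v}_p^H\mathbf{D}_{rf}\|^2$ by its mean $Q\Omega_{rf}$: this decouples the correlated numerator and denominator, both functions of $\mathbf{h}_{rf}$, and is precisely what lets the single-variable CDF transformation go through, so it is the one assumption I would flag and justify for consistency with Lemmas \ref{Lemma1}--\ref{Lemma4}.
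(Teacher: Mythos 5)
Your proposal is correct and follows essentially the same route the paper uses for its companion results (the mean-field replacement $\|\mathbf{v}_p^H\mathbf{D}_{rf}\|^2\approx Q\Omega_{rf}$ as in Appendix A, the cascaded Rayleigh CDF $F_Z(z)=1-\frac{2}{\Gamma(Q)}\bigl(z/(\Omega_{br}\Omega_{rf})\bigr)^{Q/2}K_Q\bigl(2\sqrt{z/(\Omega_{br}\Omega_{rf})}\bigr)$, the scaling $\hat\gamma_{f\to n}=c_nZ/v_{e2}$, and differentiation via $K_Q'(t)=-\tfrac12\left[K_{Q-1}(t)+K_{Q+1}(t)\right]$); the paper states Lemma \ref{Lemma5} without an explicit proof precisely because it is this specialization, with no residual-interference average needed, of the argument given for Lemmas \ref{Lemma1} and \ref{Lemma3}. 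Your bookkeeping in the final differentiation reproduces (\ref{PDF SINR f decode n}) exactly, and your flagged assumption is indeed the same one the paper invokes.
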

\subsection{Secrecy Outage Probability Analysis}
In this subsection, the secrecy outage behaviour is evaluated and the closed-form as well as asymptotic SOP expressions for user \emph{f} and user \emph{n} are both obtained in external/internal wiretapping scenarios.
\subsubsection{External Wiretapping Scenario}
The secrecy rate and target secrecy rate of user $\varphi$ can be defined as ${C_\varphi } = {\left[ {\log \left( {1 + {{\hat \gamma }_\varphi }} \right) - \log \left( {1 + {{\hat \gamma }_{e \to \varphi }}} \right)} \right]^ + }$ and ${R_\varphi }$, respectively, where ${\left[ \Delta  \right]^ + } = \max \left( {\Delta ,0} \right)$ and $\varphi  \in \{ f,n\} $. Hence, the secrecy outage events occur if ${C_\varphi } < {R_\varphi }$ and the SOP expression can be shown as follows
\begin{align}\label{SOP1}
{P_\varphi } &= {\rm{Pr}}\left( {{C_\varphi } < {R_\varphi }} \right) \notag \\ &= {\rm{Pr}}\left[ {\log \left( {\frac{{1 + {{\hat \gamma }_\varphi }}}{{1 + {{\hat \gamma }_{e \to \varphi }}}}} \right) < {R_\varphi }} \right]\notag \\ &= {\rm{Pr}}\left[ {{{\hat \gamma }_\varphi } < {2^{{R_\varphi }}}\left( {1 + {{\hat \gamma }_{e \to \varphi }}} \right) - 1} \right].
\end{align}
Note that the difference between ${\hat \gamma _\phi }$ and ${\hat \gamma _{e \to \varphi }}$ is primarily generated by the variables ${{\mathbf{h}}_{rn}}$ and ${{\mathbf{h}}_{re}}$, which are independent of each other \cite{2020XinweiSecureNOMA,2020YangliangRISPLS}. In addition, the residential interference caused by ipSIC is related to the hardware characteristics of the receiver which can be also regarded as independent variables. Hence, the SOP of users can be further calculated by utilizing marginal distributions in the following steps. According to Lemma \ref{Lemma1} and Lemma \ref{Lemma3}, the SOP for user \emph{n} with ipSIC/pSIC is shown as
\begin{normalsize}
\begin{align}\label{SOP n temp}
P_n^\xi \left( {{R_n}} \right) = \int_0^\infty  {{f_{\hat \gamma _{e \to n}^\xi }}} \left( x \right){F_{\hat \gamma _{n}^\xi }}\left[ {{2^{{R_n}}}\left( {1 + {{\hat \gamma }_{e \to n}}} \right) - 1} \right]dx,
\end{align}
\end{normalsize}where $\xi  \in \{ ipSIC,pSIC\} $. The ${f_{\hat \gamma _{e \to n}^\xi }}\left( x \right)$ and ${F_{\hat \gamma _{n}^\xi }}\left( x \right)$ with ipSIC and pSIC can be acquired from (\ref{PDF SINR EE decode n ipSIC}), (\ref{CDF SINR n ipsic}), (\ref{PDF SINR EE decode n pSIC}) and (\ref{CDF SINR n psic}),respectively.
Through similar analysis, the SOP of user \emph{f} can be shown as follows according to Lemma \ref{Lemma2} and Lemma \ref{Lemma4}.
\begin{normalsize}
\begin{align}\label{SOP f temp}
P_f^\xi \left( {{R_f}} \right) = \int_0^\infty  {{f_{\hat \gamma _{e \to f}^{}}}} \left( x \right){F_{\hat \gamma _{e \to f}^{}}}\left[ {{2^{{R_f}}}\left( {1 + {{\hat \gamma }_{e \to f}}} \right) - 1} \right]dx,
\end{align}
\end{normalsize}where ${f_{\hat \gamma _{e \to f}^{}}}\left( x \right)$ and ${F_{\hat \gamma _{e \to f}^{}}}\left( x \right)$ can be taken from (\ref{PDF SINR EE decode f}) and (\ref{CDF SINR f}), respectively. Obviously, it is difficult to tackle the complex integrations involved in (\ref{SOP n temp}) and (\ref{SOP f temp}). Therefore, approximate expressions of SOP are derived to evaluate the secrecy performance of  ARIS-NOMA networks as follows, whose accuracy is verified by simulation results.
\begin{theorem}\label{Theorem1}
By utilizing on-off control scheme, the closed-form expression of SOP for the external Eve to intercept user n's signal with ipSIC can be given by
\begin{align}\label{SOP EE n ipSIC}
P_n^{ipSIC}\left( {{R_n}} \right) \approx& \sum\limits_{s = 1}^S {{G_s}} \left[ {1 - \frac{2}{{\Gamma \left( Q \right)}}\sum\limits_{d = 1}^D {{G_d}} } \right.\notag \\ &\left. { \times {{\left( {{\Xi _n}{\varepsilon _{n1}}} \right)}^{\frac{Q}{2}}}{K_Q}\left( {2\sqrt {{\Xi _n}{\varepsilon _{n1}}} } \right)} \right],
\end{align}where ${\varepsilon _{n1}} = {2^{{R_n}}}\left[ {1 + \frac{{{a_n}{\rho _e}{\kappa ^2}Q{\Omega _{br}}{\Omega _{re}}}}{{\left( {{\kappa ^2}\sigma _t^2Q{\Omega _{re}}} \right)/\sigma _e^2 + {\varpi}{\rho _e}{\Omega _{ipe}}{\zeta _s} + 1}}} \right] - 1$, ${\rho _e} = \frac{{P_{BS}^{act}}}{{{\sigma _e}^2}}$ represents the received signal-noise radio at Eve, ${G_s} = \frac{{{{\left( {S!} \right)}^2}}}{{{{\zeta _s}}{{\left[ {L_{_S}^\prime \left( {{\zeta _s}} \right)} \right]}^2}}}$ and ${{\zeta _s}}$ represent the weight of Gauss-Laguerre quadrature formula and the d-th zero point of Laguerre polynomial ${L_S}\left( {{\zeta _s}} \right)$ with s = 1,2,3, $\cdots$ ,S, respectively. S presents a complexity accuracy tradeoff parameter and the equal sign in (\ref{SOP EE n ipSIC}) can be established when S approaches infinity.
\end{theorem}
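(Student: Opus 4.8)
The plan is to begin from the SOP integral (\ref{SOP n temp}), which after inserting the CDF of Lemma \ref{Lemma1} becomes
\begin{align}
P_n^{ipSIC}(R_n) = \int_0^\infty f_{\hat\gamma_{e\to n}^{ipSIC}}(x)\left[1 - \frac{2}{\Gamma(Q)}\sum_{d=1}^D G_d(\Xi_n\varepsilon)^{\frac{Q}{2}}K_Q\left(2\sqrt{\Xi_n\varepsilon}\right)\right]dx, \notag
\end{align}
with $\varepsilon = 2^{R_n}(1+x)-1$. Since the eavesdropper density $f_{\hat\gamma_{e\to n}^{ipSIC}}$ in (\ref{PDF SINR EE decode n ipSIC}) is itself a mixture of Bessel functions, integrating it against a second Bessel-function CDF whose argument is the affine map $2^{R_n}(1+x)-1$ has no tractable antiderivative; recognizing that this is the core obstruction is the first step, and it motivates not using the explicit eavesdropper PDF at all.

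Instead I would return to the conditional description of the eavesdropper SINR in (\ref{the SINR EE n new}). Conditioned on the residual-interference power $y=|h_{ipe}|^2$, the remaining randomness sits only in the cascaded gain $Z_e=|\mathbf{v}_p^H\mathbf{D}_{re}\mathbf{h}_{br}|^2$, while the thermal-noise norm is replaced by its mean $Q\Omega_{re}$ exactly as in Lemma \ref{Lemma1}, giving the constant $v_{e1}=\kappa^2\sigma_t^2 Q\Omega_{re}+\sigma_e^2$. Replacing $Z_e$ by its first moment $\mathbb{E}[Z_e]=Q\Omega_{br}\Omega_{re}$ — equivalently taking $\mathbb{E}[\hat\gamma_{e\to n}^{ipSIC}\mid y]$, which is exact in $Z_e$ because the SINR is linear in it — collapses the eavesdropper SINR to the deterministic quantity
\begin{align}
\hat\gamma_{e\to n}^{ipSIC}\approx \frac{a_n\rho_e\kappa^2 Q\Omega_{br}\Omega_{re}}{(\kappa^2\sigma_t^2 Q\Omega_{re})/\sigma_e^2 + \varpi\rho_e y + 1}, \notag
\end{align}
with $\rho_e=P_{BS}^{act}/\sigma_e^2$. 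Substituting this into the threshold $2^{R_n}(1+\hat\gamma_{e\to n}^{ipSIC})-1$ reproduces precisely $\varepsilon_{n1}$ once $y$ is identified with $\Omega_{ipe}\zeta_s$.

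After this reduction the SOP is a single expectation over the exponential variable $y$ with density $\tfrac{1}{\Omega_{ipe}}e^{-y/\Omega_{ipe}}$, namely $P_n^{ipSIC}(R_n)\approx\int_0^\infty \tfrac{1}{\Omega_{ipe}}e^{-y/\Omega_{ipe}}\,F_{\hat\gamma_n^{ipSIC}}\!\left(\varepsilon_{n1}(y)\right)dy$. The change of variable $t=y/\Omega_{ipe}$ casts this into the canonical form $\int_0^\infty e^{-t}g(t)\,dt$, so the $S$-point Gauss--Laguerre rule with nodes $\zeta_s$ and weights $G_s$ evaluates it as $\sum_{s=1}^S G_s\,F_{\hat\gamma_n^{ipSIC}}(\varepsilon_{n1})$. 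Inserting the Lemma \ref{Lemma1} expression for $F_{\hat\gamma_n^{ipSIC}}$, whose own inner sum over $d$ already absorbs the legitimate user's residual interference via a second Gauss--Laguerre step, then assembles the nested double sum of (\ref{SOP EE n ipSIC}).

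I expect the main obstacle to be justifying the conditional-mean approximation of $Z_e$: the cascaded product $\mathbf{v}_p^H\mathbf{D}_{re}\mathbf{h}_{br}$ does not concentrate as $Q$ grows (its magnitude-squared remains essentially exponential rather than tending to its mean), so this step is a genuine approximation rather than an asymptotic identity, which is exactly why the statement carries ``$\approx$'' and why its fidelity must be certified against simulation. A secondary, purely bookkeeping point is verifying that the thermal-noise-norm substitution and the two independent quadratures — one over the eavesdropper's $y$ and one over the user's $|h_{ipu}|^2$ inside Lemma \ref{Lemma1} — may be applied sequentially without coupling, which follows from the mutual independence of $\mathbf{h}_{re}$, $\mathbf{h}_{rn}$ and the two residual-interference channels noted just before (\ref{SOP n temp}).
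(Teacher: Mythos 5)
Your proposal is correct and follows essentially the same route as the paper's Appendix B: replace the eavesdropper's cascaded gain $Z_e=\left|\mathbf{v}_p^H\mathbf{D}_{re}\mathbf{h}_{br}\right|^2$ and thermal-noise norm by their means $Q\Omega_{br}\Omega_{re}$ and $Q\Omega_{re}$ so that, conditioned on $y=\left|h_{ipe}\right|^2$, the eavesdropping SINR collapses to the deterministic $\Upsilon$, then evaluate the remaining exponential integral over $y$ by $S$-point Gauss--Laguerre quadrature and insert the CDF of Lemma \ref{Lemma1} (with its own inner $d$-sum) to obtain the nested double sum. Your added remarks --- that the mean substitution is exact for the conditional mean of the SINR by linearity but remains a genuine approximation inside the probability, and that the two quadratures decouple by independence of the channels --- are sound and consistent with the paper's reasoning.
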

\begin{proof}
Please refer to Appendix B.
\end{proof}When $\varpi $=0, the closed-form expression of SOP for the external Eve to wiretap user \emph{n}'s signal with pSIC can be given by
\begin{normalsize}
\begin{align}\label{SOP EE n pSIC}
P_n^{pSIC} = 1 - \frac{2}{{\Gamma \left( Q \right)}}{\left( {\frac{{{\varepsilon _{n2}}{v_n}}}{{{c_n}{\Omega _{br}}{\Omega _{rn}}}}} \right)^{\frac{Q}{2}}}{K_Q}\left( {2\sqrt {\frac{{{\varepsilon _{n2}}{v_n}}}{{{c_n}{\Omega _{br}}{\Omega _{rn}}}}} } \right),
\end{align}
\end{normalsize}where ${\varepsilon _{n2}} = {2^{{R_n}}}\left[ {1 + \frac{{{a_n}{\rho _e}{\kappa ^2}Q{\Omega _{br}}{\Omega _{re}}}}{{\left( {{\kappa ^2}\sigma _t^2Q{\Omega _{re}}} \right)/\sigma _e^2 + 1}}} \right] - 1$.
\begin{theorem}\label{Theorem2}
By utilizing on-off control scheme, the closed-form expression of SOP for the external Eve to intercept user f's signal can be given by
\begin{normalsize}
\begin{align}\label{SOP EE f}
{P_f} = 1 - \frac{2}{{\Gamma \left( Q \right)}}{\left( {\frac{{{\varepsilon _f}{\Xi _f}}}{{{c_f} - {\varepsilon _f}{c_n}}}} \right)^{\frac{Q}{2}}}{K_Q}\left( {2\sqrt {\frac{{{\varepsilon _f}{\Xi _f}}}{{{c_f} - {\varepsilon _f}{c_n}}}} } \right),
\end{align}
\end{normalsize}where ${\varepsilon _f} = {2^{{R_f}}}\left( {1 + \frac{{{a_f}{\rho _e}{\kappa ^2}Q{\Omega _{br}}{\Omega _{re}}}}{{\left( {{\kappa ^2}\sigma _t^2Q{\Omega _{re}}} \right)/{\sigma _e}^2 + {a_n}{\rho _e}{\kappa ^2}Q{\Omega _{br}}{\Omega _{re}} + 1}}} \right) - 1$.
\end{theorem}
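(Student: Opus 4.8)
The plan is to reduce $P_f$ to a single evaluation of the CDF established in Lemma \ref{Lemma2}. Specializing the SOP definition (\ref{SOP1}) to $\varphi = f$, the outage event is $\hat\gamma_f < 2^{R_f}(1 + \hat\gamma_{e\to f}) - 1$, so (\ref{SOP f temp}) pairs the distribution of the legitimate SINR $\hat\gamma_f$ against the eavesdropping SINR $\hat\gamma_{e\to f}$. Because the Eve stays silent and its instantaneous CSI is unavailable at the ARIS, I would keep the full cascaded-Rayleigh law for the legitimate link (which supplies the $K_Q$ Bessel term of Lemma \ref{Lemma2}) but replace the eavesdropping cascaded channel in (\ref{the SINR EE f new}) by its mean, rendering $\hat\gamma_{e\to f}$ deterministic. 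Note that, unlike the user-$n$ ipSIC case, no residual-interference term appears in (\ref{the SINR EE f new}), so no Gauss--Laguerre averaging is needed and the threshold becomes a single constant.

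First I would compute the relevant expectations. The term $\mathbf{v}_p^H\mathbf{D}_{re}\mathbf{h}_{br}$ is a sum of $Q$ independent products of zero-mean complex Gaussians, so by independence $\E\{|\mathbf{v}_p^H\mathbf{D}_{re}\mathbf{h}_{br}|^2\} = Q\Omega_{br}\Omega_{re}$, while $\E\{\|\mathbf{v}_p^H\mathbf{D}_{re}\|^2\} = Q\Omega_{re}$. Substituting both into (\ref{the SINR EE f new}), dividing numerator and denominator by $\sigma_e^2$, and using $\rho_e = P_{BS}^{act}/\sigma_e^2$ yields the deterministic value $\frac{a_f\rho_e\kappa^2 Q\Omega_{br}\Omega_{re}}{(\kappa^2\sigma_t^2 Q\Omega_{re})/\sigma_e^2 + a_n\rho_e\kappa^2 Q\Omega_{br}\Omega_{re} + 1}$, which is exactly the constant appearing in the definition of $\varepsilon_f$. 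Hence $2^{R_f}(1 + \hat\gamma_{e\to f}) - 1 = \varepsilon_f$ after this reduction.

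With the Eve SINR now fixed, the outage event collapses to $\hat\gamma_f < \varepsilon_f$, so $P_f = \mathrm{Pr}(\hat\gamma_f < \varepsilon_f) = F_{\hat\gamma_f}(\varepsilon_f)$. The final step is to read off $F_{\hat\gamma_f}$ from (\ref{CDF SINR f}) and set $x = \varepsilon_f$, which reproduces the claimed expression with argument $\varepsilon_f\Xi_f/(c_f - \varepsilon_f c_n)$ inside both the power and $K_Q$.

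The step needing the most care is the deterministic-Eve reduction itself: justifying that substituting the mean cascaded gain (rather than integrating over its cascaded-Rayleigh law) is the intended approximation, and verifying it reproduces the stated $\varepsilon_f$. A secondary technical point is the positivity requirement $c_f - \varepsilon_f c_n > 0$, i.e.\ $\varepsilon_f < a_f/a_n$, which must hold for the argument of $K_Q$ to be real and positive; since $\hat\gamma_f$ saturates at $a_f/a_n$, whenever $\varepsilon_f \ge a_f/a_n$ the CDF equals $1$ and outage is certain, a regime the closed form implicitly excludes.
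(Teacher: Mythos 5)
Your proposal is correct and follows essentially the same route as the paper: the paper's own derivation (mirroring Appendix~B, where $\mathbb{E}\{|\mathbf{v}_p^H\mathbf{D}_{re}\mathbf{h}_{br}|^2\} = Q\Omega_{br}\Omega_{re}$ is computed and the Eve's cascaded gain is replaced by this mean) renders $\hat\gamma_{e\to f}$ deterministic, and since no residual-interference term appears for user $f$, the Gauss--Laguerre averaging used for Theorem~1 indeed drops out, leaving $P_f = F_{\hat\gamma_f}(\varepsilon_f)$ evaluated via Lemma~\ref{Lemma2}. Your added remark on the regime $\varepsilon_f \ge c_f/c_n$ (where outage is certain and the closed form ceases to apply) is a point the paper leaves implicit.
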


\subsubsection{Internal Wiretapping Scenario}
In the case of internal wiretapping scenarios, the far user \emph{f} with poor channel condition is considered as an internal Eve intending to wiretap the legitimate signals of user \emph{n}. The secrecy rate for user \emph{f} to intercept user \emph{n}'s information is given by ${C_{f \to n}} = {\left[ {\log \left( {1 + \hat \gamma _n^{}} \right) - \log \left( {1 + {{\hat \gamma }_{f \to n}}} \right)} \right]^ + }$. Hence, the SOP expression for user \emph{n} is shown as follows
\begin{normalsize}
\begin{align}\label{SOP f n temp}
P_{f \to n}^\xi \left( {{R_n}} \right) &= {\rm{Pr}}\left[ {\log \left( {1 + \hat \gamma _n^\xi } \right) - \log \left( {1 + {{\hat \gamma }_{f \to n}}} \right) < {R_n}} \right]\notag \\ &= {\rm{Pr}}\left[ {\hat \gamma _n^\xi  < {2^{{R_n}}}\left( {1 + {{\hat \gamma }_{f \to n}}} \right) - 1} \right]\notag \\ &= \int_0^\infty  {{f_{{{\hat \gamma }_{f \to n}}}}\left( x \right)} {F_{\hat \gamma _n^\xi }}\left[ {{2^{{R_n}}}\left( {1 + {{\hat \gamma }_{f \to n}}} \right) - 1} \right]dx,
\end{align}
\end{normalsize}where ${{f_{{{\hat \gamma }_{f \to n}}}}\left( x \right)}$ is taken from (\ref{PDF SINR f decode n}) and ${F_{\hat \gamma _n^\xi }}\left( x \right)$ with ipSIC as well as pSIC can be acquired from (\ref{CDF SINR n ipsic}) and (\ref{CDF SINR n psic}), respectively. Integrations in (\ref{SOP f n temp}) is still intractable to solve, so we use an approximation method to deal with the problem.
\begin{theorem}\label{Theorem3}
By utilizing on-off control scheme, the closed-form expression of SOP for user f to intercept user n's signal can be given by
\begin{normalsize}
\begin{align}\label{SOP f n ipSIC}
P_{f \to n}^{ipSIC}\left( {{R_n}} \right) =& \sum\limits_{d = 0}^D {{G_d}} \left[ {1 - \frac{2}{{\Gamma \left( Q \right)}}{{\left( {{\varepsilon _{f \to n}}{\Xi _{e5}}} \right)}^{\frac{Q}{2}}}} \right.\notag \\ &\times \left. {{K_Q}\left( {2\sqrt {{\varepsilon _{f \to n}}{\Xi _{e5}}} } \right)} \right],
\end{align}
\end{normalsize}where ${\varepsilon _{f \to n}} = {2^{{R_n}}}\left[ {1 + \frac{{{a_n}{\rho _e}{\kappa ^2}Q{\Omega _{br}}{\Omega _{rf}}}}{{\left( {{\kappa ^2}\sigma _t^2Q{\Omega _{rf}}} \right)/{\sigma _e}^2 + 1}}} \right] - 1$ and ${\Xi _{e5}} = \frac{{{v_n} + {\varpi}P_{BS}^{act}{\Omega _{ipe}}{\zeta _d}}}{{{c_n}{\Omega _{br}}{\Omega _{rn}}}}$.
\end{theorem}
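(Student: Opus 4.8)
The plan is to start from the single-integral form of the SOP given in \eqref{SOP f n temp}, namely $P_{f \to n}^{ipSIC}(R_n)=\int_0^\infty f_{\hat\gamma_{f\to n}}(x)\,F_{\hat\gamma_n^{ipSIC}}\!\left[2^{R_n}(1+x)-1\right]dx$, and to dispose of the intractable nested-Bessel integral by replacing the internal-Eve SINR $\hat\gamma_{f\to n}$ with a deterministic surrogate. The crucial observation is that, unlike the legitimate SINR, the expression for $\hat\gamma_{f\to n}$ in \eqref{SINR f decode n hat} carries no residual-interference term, so its only randomness resides in the cascaded gain $|{\mathbf v}_p^H{\mathbf D}_{rf}{\mathbf h}_{br}|^2$ and the thermal-noise norm $\|{\mathbf v}_p^H{\mathbf D}_{rf}\|^2$. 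I would therefore approximate $\hat\gamma_{f\to n}$ by the ratio of the expectations of its numerator and its denominator, which converts the random integration threshold into a constant.

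First I would evaluate these two expectations. Writing ${\mathbf v}_p^H{\mathbf D}_{rf}{\mathbf h}_{br}$ as a sum of $Q$ independent products of the block-$p$ channel coefficients, the independence of ${\mathbf h}_{rf}$ and ${\mathbf h}_{br}$ gives $\mathbb{E}\{|{\mathbf v}_p^H{\mathbf D}_{rf}{\mathbf h}_{br}|^2\}=Q\Omega_{br}\Omega_{rf}$ and $\mathbb{E}\{\|{\mathbf v}_p^H{\mathbf D}_{rf}\|^2\}=Q\Omega_{rf}$. Substituting these, dividing numerator and denominator by $\sigma_e^2$, and introducing $\rho_e=P_{BS}^{act}/\sigma_e^2$ collapses the random argument $2^{R_n}(1+\hat\gamma_{f\to n})-1$ exactly onto the constant $\varepsilon_{f\to n}$ stated in the theorem.

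With the threshold now deterministic, the integral degenerates and the SOP reduces to the legitimate CDF evaluated at that threshold, $P_{f\to n}^{ipSIC}(R_n)\approx F_{\hat\gamma_n^{ipSIC}}(\varepsilon_{f\to n})$. I would then insert the closed form of Lemma \ref{Lemma1} from \eqref{CDF SINR n ipsic}, whose Gauss--Laguerre sum $\sum_d G_d$ already averages out user $n$'s residual interference and supplies the node dependence carried by $\Xi_{e5}$. Since the Gauss--Laguerre weights satisfy $\sum_d G_d\approx\int_0^\infty e^{-t}\,dt=1$, the leading unit constant can be pulled inside the sum, which produces precisely the stated expression.

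The main obstacle is not algebraic but one of justification: the ratio-of-expectations step replaces the full distribution $f_{\hat\gamma_{f\to n}}$ of Lemma \ref{Lemma5} by a single point, and one must argue that this incurs only a small error. The natural argument is concentration of the block sums as the block size $Q$ grows, together with the fact that in the high-SNR regime relevant to the secrecy-diversity analysis the thermal-noise floor dominates the eavesdropper denominator, so $\hat\gamma_{f\to n}$ is tightly clustered about its mean. I would close by noting that the residual error from both this surrogate and the finite $D$ truncation of the quadrature rule diminishes as $Q$ and $D$ increase, consistent with the simulation-based validation provided later in the paper.
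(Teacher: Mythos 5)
Your proposal takes essentially the same route as the paper: the paper's own method (made explicit in the proof of Theorem~\ref{Theorem1} in Appendix~B, which Theorem~\ref{Theorem3} follows mutatis mutandis) likewise replaces the internal Eve's cascaded gain ${\left| {{\mathbf{v}}_p^H{{\mathbf{D}}_{rf}}{{\mathbf{h}}_{br}}} \right|^2}$ and thermal-noise norm ${\left\| {{\mathbf{v}}_p^H{{\mathbf{D}}_{rf}}} \right\|^2}$ by their expectations $Q{\Omega _{br}}{\Omega _{rf}}$ and $Q{\Omega _{rf}}$, thereby collapsing the integral in (\ref{SOP f n temp}) to ${F_{\hat \gamma _n^{ipSIC}}}\left( {{\varepsilon _{f \to n}}} \right)$, into which Lemma~\ref{Lemma1} is inserted with the $\sum\nolimits_d {{G_d}}  \approx 1$ rearrangement you describe. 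The only caveat is your justification of the surrogate: ${\left| {{\mathbf{v}}_p^H{{\mathbf{D}}_{rf}}{{\mathbf{h}}_{br}}} \right|^2}$ is asymptotically exponential as $Q$ grows (its coefficient of variation does not shrink), so only the thermal-noise norm truly concentrates; the paper instead validates this mean-value approximation by Monte Carlo simulation rather than by a concentration argument.
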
When $\varpi $=0, the closed-form expression of SOP for user \emph{f} to wiretap user \emph{n}'s signal can be shown as
\begin{align}\label{SOP f n pSIC}
P_{f \to n}^{pSIC}\left( {{R_n}} \right) =& 1 - \frac{2}{{\Gamma \left( Q \right)}}{\left( {\frac{{{\varepsilon _{f \to n}}{v_n}}}{{{c_n}{\Omega _{br}}{\Omega _{rn}}}}} \right)^{\frac{Q}{2}}}\notag \\ &\times{K_Q}\left( {2\sqrt {\frac{{{\varepsilon _{f \to n}}{v_n}}}{{{c_n}{\Omega _{br}}{\Omega _{rn}}}}} } \right).
\end{align}
\subsection{Secrecy Diversity Order}
To further confirm the secrecy performance of ARIS-NOMA networks, the asymptotic expressions of SOP are attained in high SNR region. Define the secrecy diversity order as follows
\begin{align}\label{div define}
div =  - \mathop {\lim }\limits_{\rho  \to \infty } \frac{{\log \left[ {{P_{asy}}\left( \rho  \right)} \right]}}{{\log \rho }},
\end{align}where ${{P_{asy}}\left( \rho  \right)}$ refers to the asymptotic SOP with parameter $\rho$. The asymptotic SOPs of users under external as well as internal eavesdropping scenarios are both researched in the following.
\begin{corollary}\label{Corollary1}
By utilizing on-off control, the asymptotic SOP for external Eve to decode user n's signals with ipSIC when $\rho  \to \infty $ can be given by
\begin{normalsize}
\begin{align}\label{ASY SOP EE n ipSIC}
P_{n,asy}^{ipSIC}\left( {{R_n}} \right) =& \sum\limits_{s = 1}^S {{G_s}} \left[ {1 - \frac{2}{{\Gamma \left( Q \right)}}\sum\limits_{d = 1}^D {{G_d}} {{\left( {\frac{{{\varepsilon _{n1}}{\Omega _{ipu}}{\zeta _d}}}{{{a_n}{\kappa ^2}{\Omega _{br}}{\Omega _{rn}}}}} \right)}^{\frac{Q}{2}}}} \right.\notag \\ &\left. { \times {K_Q}\left( {2\sqrt {\frac{{{\varepsilon _{n1}}{\Omega _{ipu}}{\zeta _d}}}{{{a_n}{\kappa ^2}{\Omega _{br}}{\Omega _{rn}}}}} } \right)} \right],
\end{align}
\end{normalsize}
\end{corollary}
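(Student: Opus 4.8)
The plan is to obtain the asymptotic SOP directly from the exact closed-form expression in Theorem \ref{Theorem1}, by taking the high-SNR limit of its argument rather than re-deriving the underlying integral. Observe that in (\ref{SOP EE n ipSIC}) the transmit power $P_{BS}^{act}$ (equivalently the SNR $\rho$) enters the Bessel term $K_Q(2\sqrt{\Xi_n\varepsilon_{n1}})$ and the prefactor $(\Xi_n\varepsilon_{n1})^{Q/2}$ only through the product $\Xi_n\varepsilon_{n1}$; the outer Gauss--Laguerre sum over $s$ and the structural factor $1-\frac{2}{\Gamma(Q)}\sum_d G_d(\cdot)^{Q/2}K_Q(\cdot)$ are otherwise unaffected. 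Hence it suffices to evaluate $\lim_{\rho\to\infty}\Xi_n$ and substitute, since the finite sums over $d$ and $s$ together with the continuous Bessel functions commute with the limit.

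First I would expand $\Xi_n=\frac{v_n+\varpi P_{BS}^{act}\Omega_{ipu}\zeta_d}{c_n\Omega_{br}\Omega_{rn}}$ with $c_n=a_n P_{BS}^{act}\kappa^2$ and $v_n=\kappa^2\sigma_t^2 Q\Omega_{rn}+\sigma^2$. As $P_{BS}^{act}\to\infty$ under ipSIC ($\varpi\neq 0$), the residual-interference term $\varpi P_{BS}^{act}\Omega_{ipu}\zeta_d$ grows linearly in $P_{BS}^{act}$ while $v_n$ stays bounded, so the numerator is asymptotically dominated by the residual interference. The common factor $P_{BS}^{act}$ then cancels between numerator and denominator, giving $\Xi_n\to\frac{\Omega_{ipu}\zeta_d}{a_n\kappa^2\Omega_{br}\Omega_{rn}}$ (with the residual-interference coefficient normalized to unity). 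Substituting this limit for $\Xi_n$ in (\ref{SOP EE n ipSIC}) reproduces (\ref{ASY SOP EE n ipSIC}); the Eve-side factor $\varepsilon_{n1}$ is retained because it converges to a positive bounded constant (the ratio in its definition stays finite as $\rho_e\to\infty$, the dominant terms cancelling) and therefore does not alter the $\rho$-scaling.

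The qualitative payoff comes from feeding the result into the diversity definition (\ref{div define}): because the limiting $\Xi_n$ is a strictly positive constant independent of $\rho$, the argument $\Xi_n\varepsilon_{n1}$ is bounded away from zero, so $P_{n,asy}^{ipSIC}$ tends to a nonzero error floor and $\log P_{asy}/\log\rho\to 0$, i.e.\ the secrecy diversity order is zero under ipSIC. I expect the only delicate point to be the justification that the bounded thermal/AWGN term $v_n$ is genuinely negligible against the residual-interference term---this requires $\varpi>0$, which is exactly what distinguishes ipSIC from pSIC and explains why the floor vanishes when $\varpi=0$---together with the bookkeeping of how the coefficient $\varpi$ is normalized; the interchange of the limit with the finite sums over $d$ and $s$ is routine.
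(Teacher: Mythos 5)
Your proposal is correct, but it reaches (\ref{ASY SOP EE n ipSIC}) by a genuinely different route than the paper. The paper's own proof works at the level of the random SINR: it first approximates $\hat \gamma _n^{ipSIC}$ at high SNR by $\hat \gamma _{n,asy}^{ipSIC} = {a_n}{\kappa ^2}{\left| {{\mathbf{v}}_p^H{{\mathbf{D}}_{rn}}{{\mathbf{h}}_{br}}} \right|^2}/\left( {\varpi {{\left| {{h_{ipu}}} \right|}^2}} \right)$ (dropping the ARIS thermal noise and AWGN against the residual interference), then re-derives the CDF of this limiting random variable by conditioning on ${\left| {{h_{ipu}}} \right|^2}$, integrating against its exponential density, and re-applying the Gauss--Laguerre quadrature, and finally redoes the outer averaging over the eavesdropper's statistics to arrive at (\ref{ASY SOP EE n ipSIC}). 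You instead treat the closed form (\ref{SOP EE n ipSIC}) of Theorem \ref{Theorem1} as the starting point and take the deterministic limit ${\Xi _n} \to \varpi {\Omega _{ipu}}{\zeta _d}/\left( {{a_n}{\kappa ^2}{\Omega _{br}}{\Omega _{rn}}} \right)$ inside it, using that $\rho$ enters only through ${\Xi _n}$ (and through ${\varepsilon _{n1}}$, which converges to a finite positive constant as ${\rho _e} \to \infty$), and that the finite sums over $d$ and $s$ and the continuous Bessel functions commute with the limit. Your route is more economical---no integral has to be recomputed, and the zero secrecy diversity order of Remark \ref{remark1} falls out immediately because the limiting argument is bounded away from zero---while the paper's route makes the physical mechanism (residual interference dominating both noise terms, with $P_{BS}^{act}$ cancelling) explicit at the SINR level and does not presuppose that the approximations behind Theorem \ref{Theorem1} (Gauss--Laguerre truncation and the mean-value treatment of the Eve link) hold uniformly in $\rho$; since both derivations perform the same two commutations in opposite orders, they produce the same expression. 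One further point: your limit retains the factor $\varpi$, which is consistent with the asymptotic CDF appearing in the paper's own proof, whereas the corollary statement (\ref{ASY SOP EE n ipSIC}) silently drops $\varpi$ (equivalently absorbs it into ${\Omega _{ipu}}$); this is an inconsistency in the paper's bookkeeping, not in your argument, and your remark about normalization flags it correctly.
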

\begin{proof}
\emph{When $\rho  \to \infty $, the SINR $\hat \gamma _n^{ipSIC} = \frac{{{a_n}P_{BS}^{act}{\kappa ^2}{{\left| {{\mathbf{v}}_p^H{{\mathbf{D}}_{rn}}{{\mathbf{h}}_{br}}} \right|}^2}}}{{{\kappa ^2}\sigma _t^2{{\left\| {{\mathbf{v}}_p^H{{\mathbf{D}}_{rn}}} \right\|}^2} + \varpi P_{BS}^{act}{{\left| {{h_{ipu}}} \right|}^2} + {\sigma ^2}}}$ can be approximatively expressed as $\hat \gamma _{n,asy}^{ipSIC} = \frac{{{a_n}{\kappa ^2}{{\left| {{\mathbf{v}}_p^H{{\mathbf{D}}_{rn}}{{\mathbf{h}}_{br}}} \right|}^2}}}{{{{\left| {\varpi}{{h_{ipu}}} \right|}^2}}}$. The CDF of cascaded Rayleigh channels can be given by
\begin{align}
{F_Z}\left( z \right) =  1 - \frac{2}{{\Gamma \left( Q \right)}}{\left( {\frac{z}{{{\Omega _{br}}{\Omega _{rn}}}}} \right)^{\frac{Q}{2}}}{K_Q}\left( {2\sqrt {\frac{z}{{{\Omega _{br}}{\Omega _{rn}}}}} } \right).
\end{align}
Hence, the CDF of $Z = {\left| {{\mathbf{v}}_p^H{{\mathbf{D}}_{rn}}{{\mathbf{h}}_{br}}} \right|^2}$ can be rewritten as
\begin{normalsize}
\begin{align}\label{ASY CDF Z}
F_Z^{asy}\left( x \right) =& \Pr \left( {\frac{{{a_n}{\kappa ^2}Z}}{{\varpi {{\left| {{h_{ipu}}} \right|}^2}}} < x} \right)\notag \\ =& 1 - \frac{2}{{\Gamma \left( Q \right)}}{\left( {\frac{{\varpi xy}}{{{a_n}{\kappa ^2}{\Omega _{br}}{\Omega _{rn}}}}} \right)^{\frac{Q}{2}}}\notag \\ &\times{K_Q}\left( {2\sqrt {\frac{{\varpi xy}}{{{a_n}{\kappa ^2}{\Omega _{br}}{\Omega _{rn}}}}} } \right).
\end{align}
\end{normalsize}
According to (\ref{a1}), the CDF of $\hat \gamma _{n,asy}^{ipSIC}$ is shown as
\begin{normalsize}
\begin{align}\label{ASY CDF Z temp}
{F_{\hat \gamma _{n,asy}^{ipSIC}}}\left( x \right) =& \int_0^\infty  {{f_{{{\left| {{h_{ipu}}} \right|}^2}}}\left( y \right)} F_Z^{asy}\left( x \right)dy \notag \\
=& 1 - \frac{2}{{{\Omega _{ipu}}\Gamma \left( Q \right)}}\int_0^\infty  {{e^{ - \frac{y}{{{\Omega _{ipu}}}}}}} {\left( {\frac{{\varpi xy}}{{{a_n}{\kappa ^2}{\Omega _{br}}{\Omega _{rn}}}}} \right)^{\frac{Q}{2}}}\notag \\ &\times {K_Q}\left( {2\sqrt {\frac{{\varpi xy}}{{{a_n}{\kappa ^2}{\Omega _{br}}{\Omega _{rn}}}}} } \right)dy.
\end{align}
\end{normalsize}By replacing $y/{\Omega _{ipu}}$ with t and using Gauss-Laguerre quadrature formula, we can obtain
\begin{normalsize}
\begin{align}\label{ASY CDF Z}
{F_{\hat \gamma _{n,asy}^{ipSIC}}}\left( x \right) =& 1 - \frac{2}{{\Gamma \left( Q \right)}}\sum\limits_{d = 1}^D {{G_d}} {\left( {\frac{{\varpi x{\Omega _{ipu}}{\zeta _d}}}{{{a_n}{\kappa ^2}{\Omega _{br}}{\Omega _{rn}}}}} \right)^{\frac{Q}{2}}}\notag \\ &\times{K_Q}\left( {2\sqrt {\frac{{\varpi x{\Omega _{ipu}}{\zeta _d}}}{{{a_n}{\kappa ^2}{\Omega _{br}}{\Omega _{rn}}}}} } \right).
\end{align}
\end{normalsize}With the assistance of (\ref{b4}) and some straightforward calculations, we can acquire (\ref{ASY SOP EE n ipSIC}) and the proof is completed.}
\end{proof}
\begin{remark}\label{remark1}
Upon substituting (\ref{ASY SOP EE n ipSIC}) into (\ref{div define}), the secrecy diversity order for external Eve to intercept user n's information with ipSIC equals zero, which is caused by the residual interference from ipSIC.
\end{remark}
When $\varpi $=0, the asymptotic SOP for external Eve to decode user \emph{n}'s signals with pSIC can be given in two cases, i.e., \emph{Q} = 1 and \emph{Q} $ > $ 1 which are respectively shown as
\begin{normalsize}
\begin{align}\label{ASY SOP EE n pSIC Q1}
P_{n,asy}^{pSIC}\left( {{R_n}} \right) = { -{\left( {\frac{{{v_n}{\varepsilon _{n2}}}}{{{c_n}{\Omega _{br}}{\Omega _{rn}}}}} \right)} \ln \left( {\frac{{{v_n}{\varepsilon _{n2}}}}{{{c_n}{\Omega _{br}}{\Omega _{rn}}}}} \right)},
\end{align}
\end{normalsize}
and
\begin{normalsize}
\begin{align}\label{ASY SOP EE n pSIC Q2}
P_{n,asy}^{pSIC}\left( {{R_n}} \right) = \frac{{{v_n}{\varepsilon _{n2}}}}{{{c_n}{\Omega _{br}}{\Omega _{rn}}\left( {Q - 1} \right)}}.
\end{align}
\end{normalsize}
\begin{proof}
\emph{According to \emph{\cite{ZhiguoSimpleDesign}}, the modified Bessel function of the second kind can be approximately expressed as ${K_Q}\left( x \right) \approx \frac{1}{x} + \frac{x}{2}\ln \left( {\frac{x}{2}} \right)$ and ${K_Q}\left( x \right) \approx \frac{1}{2}\left[ {\frac{{\left( {Q - 1} \right)!}}{{{{\left( {x/2} \right)}^Q}}} - \frac{{\left( {Q - 2} \right)!}}{{{{\left( {x/2} \right)}^{Q - 2}}}}} \right]$, respectively. As a consequence, the CDF expressions for user n in high SNR region can be acquired by substituting the asymptotic Bessel functions into (\ref{CDF SINR n psic}), which are shown as
\begin{align}\label{ASY CDF EE n pSIC Q1 proof}
F_{\hat \gamma _n^{pSIC}}^{asy}\left( x \right) = {-{\left( {\frac{{{v_n}x}}{{{c_n}{\Omega _{br}}{\Omega _{rn}}}}} \right)}\ln \left( {\frac{{{v_n}x}}{{{c_n}{\Omega _{br}}{\Omega _{rn}}}}} \right)},
\end{align}
and
\begin{align}\label{ASY CDF EE n pSIC Q2 proof}
F_{\hat \gamma _n^{pSIC}}^{asy}\left( x \right) = \frac{{{v_n}x}}{{{c_n}{\Omega _{br}}{\Omega _{rn}}\left( {Q - 1} \right)}},
\end{align}respectively.
With the assistance of (\ref{b4}) and simple calculations, we can attain (\ref{ASY SOP EE n pSIC Q1}) and (\ref{ASY SOP EE n pSIC Q2}). The proof is completed.}
\end{proof}
\begin{corollary}\label{Corollary2}
By utilizing on-off control, the asymptotic SOP for external Eve to decode user f's signals when $\rho  \to \infty $ can be given by
\begin{normalsize}
\begin{align}\label{ASY SOP EE f Q1}
{P_{f,asy}}\left( {{R_f}} \right) = {-{\left( {\frac{{{\Xi _f}{\varepsilon _f}}}{{{c_f} - {c_n}{\varepsilon _f}}}} \right)}\ln \left( {\frac{{{\Xi _f}{\varepsilon _f}}}{{{c_f} - {c_n}{\varepsilon _f}}}} \right)},
\end{align}
\end{normalsize}
and
\begin{normalsize}
\begin{align}\label{ASY SOP EE f Q2}
{P_{f,asy}}\left( {{R_f}} \right) = \frac{{{\Xi _f}{\varepsilon _f}}}{{\left( {{c_f} - {c_n}{\varepsilon _f}} \right)\left( {Q - 1} \right)}},
\end{align}
\end{normalsize}
for Q = 1 and Q $>$ 1, respectively.
\end{corollary}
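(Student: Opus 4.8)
The plan is to mirror the argument used for the pSIC asymptotic SOP of user $n$ in the preceding proof, starting directly from the exact closed-form in Theorem \ref{Theorem2}. Writing $w = \frac{\varepsilon_f \Xi_f}{c_f - \varepsilon_f c_n}$, the SOP (\ref{SOP EE f}) reads $P_f = 1 - \frac{2}{\Gamma(Q)} w^{Q/2} K_Q(2\sqrt{w})$, so the whole task reduces to extracting the leading-order behaviour of $w^{Q/2} K_Q(2\sqrt{w})$ as $\rho \to \infty$. First I would confirm that the Bessel argument $2\sqrt{w}$ genuinely collapses to zero in this limit: since $\Xi_f = v_f/(\Omega_{br}\Omega_{rf})$ with $v_f = \kappa^2 \sigma_t^2 Q \Omega_{rf} + \sigma^2$ is independent of the transmit power, while $c_f - \varepsilon_f c_n = P_{BS}^{act}\kappa^2(a_f - \varepsilon_f a_n)$ grows linearly with the power budget, and since $\varepsilon_f$ saturates to the constant $2^{R_f}(1 + a_f/a_n) - 1$ as $\rho_e \to \infty$, one has that $w$ vanishes like $1/P_{BS}^{act}$. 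Hence the small-argument expansions of $K_Q$ become applicable.

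With the argument established to be small, I would substitute the two-term expansions of the modified Bessel function of the second kind quoted from \cite{ZhiguoSimpleDesign}, treating $Q = 1$ and $Q > 1$ separately exactly as in the proof above. For $Q = 1$, using $K_1(x) \approx \frac{1}{x} + \frac{x}{2}\ln(\frac{x}{2})$ with $x = 2\sqrt{w}$ gives $2\sqrt{w}\,K_1(2\sqrt{w}) \approx 1 + w\ln w$; the unit term cancels the leading $1$ in $P_f$, leaving $P_{f,asy} \approx -w\ln w$, which is (\ref{ASY SOP EE f Q1}). For $Q > 1$, using $K_Q(x) \approx \frac{1}{2}[\frac{(Q-1)!}{(x/2)^Q} - \frac{(Q-2)!}{(x/2)^{Q-2}}]$ together with $\Gamma(Q) = (Q-1)!$ yields $\frac{2}{\Gamma(Q)} w^{Q/2} K_Q(2\sqrt{w}) \approx 1 - \frac{w}{Q-1}$; again the unit terms cancel and the surviving first-order term is $P_{f,asy} \approx \frac{w}{Q-1}$, i.e.\ (\ref{ASY SOP EE f Q2}). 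Restoring $w = \frac{\varepsilon_f \Xi_f}{c_f - \varepsilon_f c_n}$ completes both cases.

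The step I expect to demand the most care is not the substitution itself but the bookkeeping around the cancellation: in each case the Bessel expansion contributes an order-one term that must exactly annul the leading $1$ of $P_f$, so one has to retain both terms of the expansion and track the next-order contribution (the $w\ln w$ term for $Q=1$ and the $w/(Q-1)$ term for $Q>1$) rather than discarding it as higher order. A secondary point worth stating explicitly is the requirement $a_f - \varepsilon_f a_n > 0$, which keeps $w$ positive and finite and is precisely the feasibility condition ensuring the denominator $c_f - \varepsilon_f c_n$ does not vanish; under this condition the limit is well defined, and the secrecy diversity order read off via (\ref{div define}) is one for $Q > 1$, consistent with the logarithmically modified decay obtained for $Q = 1$.
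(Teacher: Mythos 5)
Your core computation is exactly the paper's (implicit) route for Corollary \ref{Corollary2}: start from the exact closed form (\ref{SOP EE f}) of Theorem \ref{Theorem2}, set $w = \varepsilon_f \Xi_f/(c_f - \varepsilon_f c_n)$, substitute the two small-argument expansions of $K_Q$ quoted from \cite{ZhiguoSimpleDesign}, and let the order-one terms cancel the leading $1$; your bookkeeping for both $Q=1$ (surviving term $-w\ln w$) and $Q>1$ (surviving term $w/(Q-1)$) is correct and reproduces (\ref{ASY SOP EE f Q1}) and (\ref{ASY SOP EE f Q2}).

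However, your justification that the Bessel argument collapses to zero contains a genuine inconsistency. You let $\rho_e \to \infty$ so that $\varepsilon_f$ saturates at $2^{R_f}\left(1 + a_f/a_n\right) - 1$, and simultaneously invoke the feasibility condition $a_f - \varepsilon_f a_n > 0$. These two statements cannot both hold: since $a_n + a_f = 1$, the saturation value yields $a_f - \varepsilon_f a_n = 1 - 2^{R_f} < 0$ for every $R_f > 0$. Hence in the joint limit you describe, the denominator $c_f - \varepsilon_f c_n$ eventually turns negative, $w$ does not tend to $0^+$, and the expansion --- indeed the formula of Theorem \ref{Theorem2} itself, which is only valid for $\varepsilon_f < c_f/c_n$ because Lemma \ref{Lemma2}'s CDF equals $1$ beyond that point --- no longer applies. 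Physically, user $f$'s SINR is capped at $c_f/c_n = a_f/a_n$ and the eavesdropper's SINR saturates at the same cap, so in that regime the SOP tends to $1$ rather than to a decaying asymptote. The regime in which Corollary \ref{Corollary2} and Remark \ref{remark2} are meaningful is the one presumed by the diversity-order definition (\ref{div define}): $\rho \to \infty$ on the legitimate link with the eavesdropper's effective threshold $\varepsilon_f$ held fixed (finite $\rho_e$) and assumed to satisfy $\varepsilon_f < c_f/c_n$; under that reading $w \propto 1/\rho \to 0^+$ and your algebra goes through verbatim. You should therefore replace the saturation argument by this fixed-$\varepsilon_f$ framing, promoting your feasibility requirement from an afterthought to an explicit hypothesis of the limit.
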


\begin{remark}\label{remark2}
Upon substituting (\ref{ASY SOP EE n pSIC Q1}) and (\ref{ASY SOP EE n pSIC Q2}) into (\ref{div define}), the secrecy diversity order for external Eve to intercept user n's information with pSIC equals 1. Similarly, the secrecy diversity order for external Eve to intercept user f's information also equals 1 by substituting (\ref{ASY SOP EE f Q1}) and (\ref{ASY SOP EE f Q2}) into (\ref{div define}) for arbitrary value of Q. This remark indicates that although the thermal noise can impair the secrecy outage behavior of ARIS-NOMA networks, it cannot shape the trend of the curves like residual interference.
\end{remark}
\subsection{Secrecy Throughput Analysis}
Based on the derived SOP expressions above, the secrecy throughput for user $\varphi $ in external eavesdropping scenarios with ipSIC and pSIC is defined as follows
\begin{align}\label{SST-EE}
T_\varphi ^\xi \left( {{R_\varphi }} \right) = \left[ {1 - P_\varphi ^\xi \left( {{R_\varphi }} \right)} \right]{R_\varphi },
\end{align}where $\xi  \in \left\{ {ipSIC,pSIC} \right\}$, ${P_\varphi ^\xi \left( {{R_\varphi }} \right)}$ can be obtained from (\ref{SOP EE n ipSIC}), (\ref{SOP EE n pSIC}) and (\ref{SOP EE f}) in Section \ref{SectionIII}-B, respectively.

Similarly, the secrecy throughput for user \emph{n} in internal eavesdropping scenarios with ipSIC and pSIC is given by
\begin{align}\label{SST-IE}
T_{f \to n}^\xi \left( {{R_n}} \right) = \left[ {1 - P_{f \to n}^\xi \left( {{R_n}} \right)} \right]{R_n},
\end{align}where $\xi  \in \left\{ {ipSIC,pSIC} \right\}$, ${P_{f \to n}^\xi \left( {{R_n}} \right)}$ can be obtained from (\ref{SOP f n ipSIC}) and (\ref{SOP f n pSIC}) in Section \ref{SectionIII}-B, respectively.

\section{Numerical Results}\label{SectionIV}
\begin{table}[h]
\caption{The table of Monte Carlo simulation parameters}
\begin{tabular}{ll} 
\toprule
Parameter types &Values \\
\midrule
The pathloss factors                       & $\alpha  = 2$, $\beta  =  - 30$ dB                                                       \\
The AWGN power                             & ${\sigma ^2} = \sigma _e^2 =  - 55$ dBm                                                        \\
The hardware power consumption &  ${P_{PS}} = {P_{DC}} = - 10$ dBm  \\

The power allocation coefficients & \begin{tabular}[c]{@{}l@{}}${a_f} = 0.7$, ${a_n} = 0.3$\end{tabular}  \\
The target secrecy rates for users         & \begin{tabular}[c]{@{}l@{}}$R_f = R_n = 0.05$ BPCU\end{tabular}        \\

The distance from BS to ARIS               & ${d_{br}} = 20{\text{ m}}$                                                            \\
The distance from ARIS to users/Eves       & \begin{tabular}[c]{@{}l@{}}${d_{rf}} = {d_{re}} = 20{\text{ m}}$\\
${d_{rn}} = 10{\text{ m}}$ \end{tabular} \\
  \bottomrule
  \end{tabular}
  \label{tbl:table1}
\end{table}

In this section, numerical results are presented to confirm the accuracy of the closed-form expressions under both external and internal eavesdropping scenarios. To facilitate
notational presentation, the simulation parameters utilized are given in Table I \cite{2021CaihongSecureNOMA,CunhuaPanARIS}, where BPCU indicates the abbreviation for bit per
channel use. Note that the values of ${\left| {{h_{ipu}}} \right|^2}$ and ${\left| {{h_{ipe}}} \right|^2}$ closely relate to the power of the signal received by the users. Specifically, about $1/7$, $1/70$ and $1/700$ of user \emph{m}'s signal power are taken as the residential interference during user \emph{n}'s SIC process \cite{2022XinweiYueRISNOMA}. In addition, since the reconfigurable elements on ARIS are small and closely spaced [18], the fading characteristics of the cascaded channels corresponding to each reconfigurable element differ very little from each other, so all the attenuation factors are set to the same value. In future studies, especially for large-scale ARIS scenarios, we will relax this assumption and consider the impact of the variation of the attenuation factor on each ARIS element on the secrecy performance of ARIS-NOMA networks. To highlight the secrecy performance of  ARIS-aided NOMA networks, the PRIS-aided networks, AF relay, and half-duplex (HD)/FD-DF relay communication schemes are considered as benchmarks, which share the same total power budget as ARIS-aided NOMA networks.

\subsection{External Wiretapping Scenario}

\begin{figure}[t!]
    \begin{center}
        \includegraphics[width=2.93in,  height=2.2in]{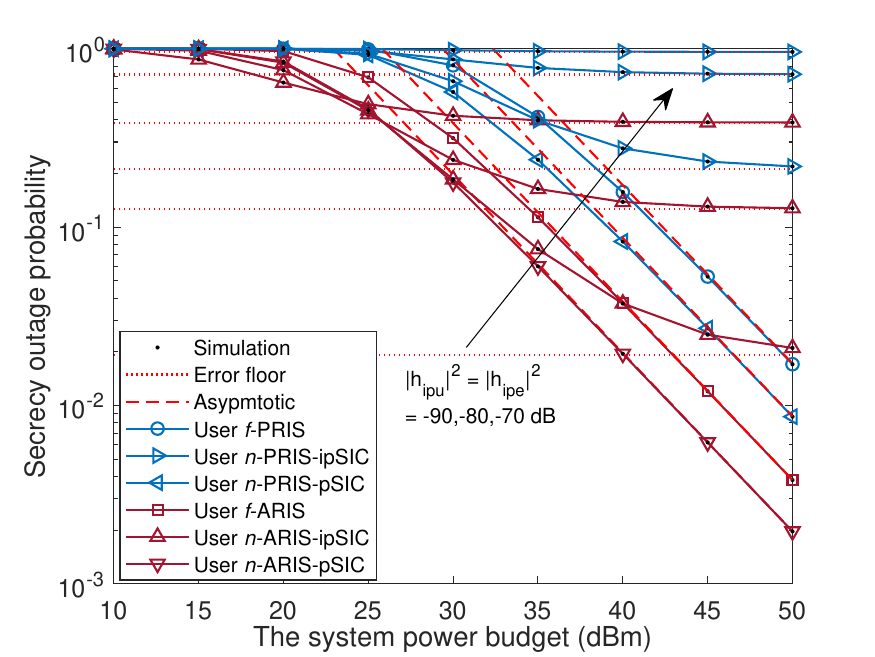}
        \caption{The SOP versus system power budget under ARIS-NOMA and PRIS-NOMA networks in external eavesdropping scenarios, with \emph{M} = 40, \emph{P} = 2 and \emph{Q} = 20.}
        \label{EE_SOP_1}
    \end{center}
\end{figure}

Fig. \ref{EE_SOP_1} plots the SOP versus system power budget under external eavesdropping scenario, where the secrecy performance of PRIS-NOMA networks is taken into account for comparison. The theoretical analysis curves for user \emph{f} and user \emph{n} with ipSIC/pSIC are obtained from (\ref{SOP EE f}), (\ref{SOP EE n ipSIC}) and (\ref{SOP EE n pSIC}), respectively. It is obvious to see that the Monte Carlo simulation results are consistent with the analytical results during the entire range of power budget. In the high SNR region, analysis curves can converge exactly to the asymptotic SOP lines based on (\ref{ASY SOP EE f Q2}), (\ref{ASY SOP EE n ipSIC}) and (\ref{ASY SOP EE n pSIC Q2}), which also manifests the validity of the theoretical expressions. We can observe that ARIS-NOMA networks can achieve lower SOP than PRIS-NOMA networks. This is because that active reconfigurable elements can further amplify the input radio signals $\kappa $ times larger by allocating a portion of the system power budget to these components. Hence, the higher SNR can be received at users and thus strengthen the secure communication. Moreover, the error floors appear for  ARIS-NOMA networks with ipSIC. This is due to the residential interference caused by ipSIC, which also matches with the conclusions in \textbf{Remark \ref{remark1}}.

\begin{figure}[t!]
    \begin{center}
        \includegraphics[width=2.93in,  height=2.2in]{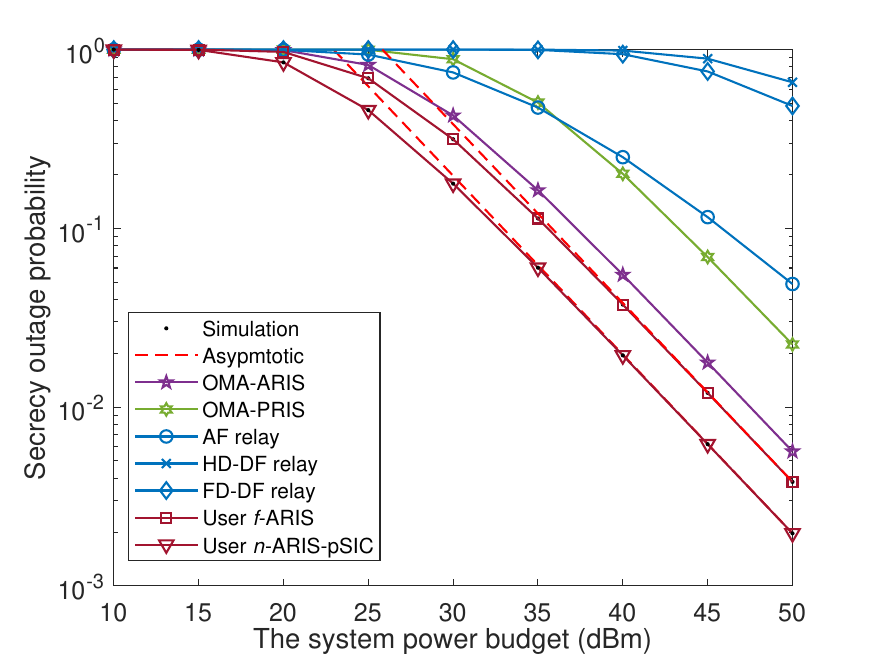}
        \caption{The SOP versus system power budget under ARIS-NOMA, ARIS/PRIS-OMA and conventional relaying schemes in external eavesdropping scenarios, with \emph{M} = 40, \emph{P} = 2, \emph{Q} = 20 and $R_{OMA}$ = 0.1 BPCU.}
        \label{EE_SOP_2}
    \end{center}
\end{figure}

Fig. \ref{EE_SOP_2} plots the SOP versus system power budget under external eavesdropping scenario, where the secrecy performance of ARIS/PRIS-OMA networks, AF relay and HD/FD-DF relay schemes are surveyed as benchmarks. One can observe from the figure that the secrecy performance of  ARIS-NOMA networks exceeds that of ARIS-OMA, AF relay, HD-DF and FD-DF communication schemes. The reasons are explained as follows: 1) The higher utilization rate of spectrum can be acquired in  ARIS-NOMA networks since NOMA is able to superpose multiple signals together within the same resource block; and 2) ARIS is able to adjust the phase shifts and amplitudes of the incident signals, which remarkably improve the electromagnetic environment and enhance the quality of channel conditions around.

\begin{figure}[t!]
    \begin{center}
        \includegraphics[width=2.93in,  height=2.2in]{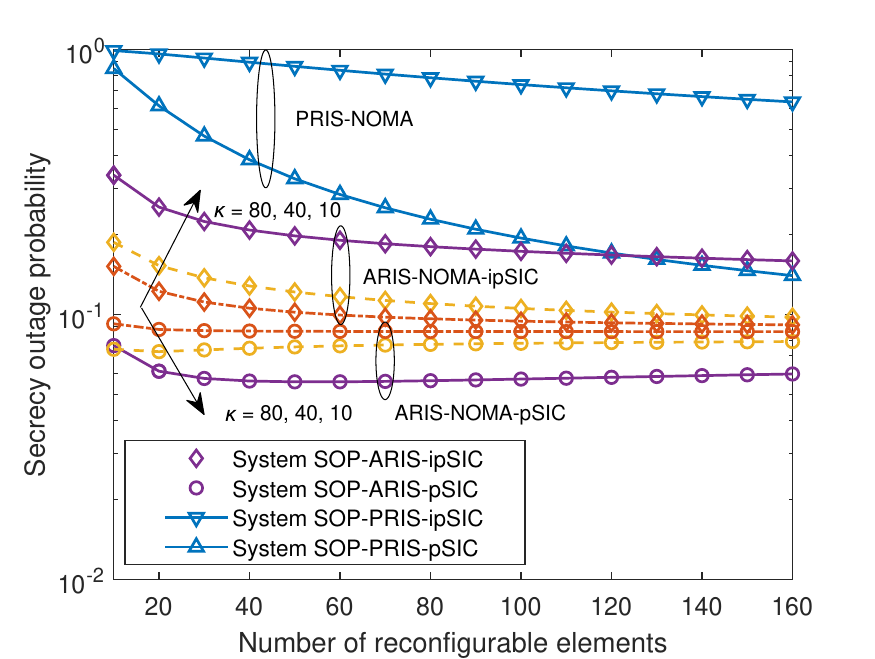}
        \caption{The system SOP versus different number of reconfigurable elements under external eavesdropping scenario, with ${P_{tot}} =$ -40 dBm, $\sigma _t^2 =$ -40 dBm and ${\left| {{h_{ipu}}} \right|^2} = {\left| {{h_{ipe}}} \right|^2} =$ -80 dB.}
        \label{SOP_diff_M}
    \end{center}
\end{figure}

Fig. \ref{SOP_diff_M} plots the system SOP versus different number of reconfigurable elements under external eavesdropping scenario. The analysis curves are plotted according to (\ref{SOP EE f}), (\ref{SOP EE n ipSIC}), (\ref{SOP EE n pSIC}), respectively. We can observe that the system SOP of  ARIS-NOMA networks drops first and then level off at stable values with the growth of the number of active reconfigurable elements. This is due to the fact that introducing more active components are able to improve the freedom of spatial design, however, excessive reconfigurable elements can also mix lots of thermal noise in the received signals, which is not conducive to users decoding their own signals and counteracts the channel gain brought by the spatial freedom. Another observation is that larger values of $\kappa $ lead to the decline in the system SOP for  ARIS-NOMA networks with ipSIC. The main reason behind this phenomenon is that more power will be occupied at  ARIS since the value of $\kappa $ rises gradually and it can directly weaken the unfavorable effect caused by residential interference. On the contrary, decreasing $\kappa $ is beneficial to the system secrecy performance of  ARIS-NOMA with pSIC. This is because the thermal noise brought by active elements holds the main station of the received interference in pSIC situation. Obviously, applying smaller $\kappa $ can effectively lower the detrimental noise and provide users with superior SNR to resist eavesdropping.

\begin{figure}[t!]
    \begin{center}
        \includegraphics[width=2.93in,  height=2.2in]{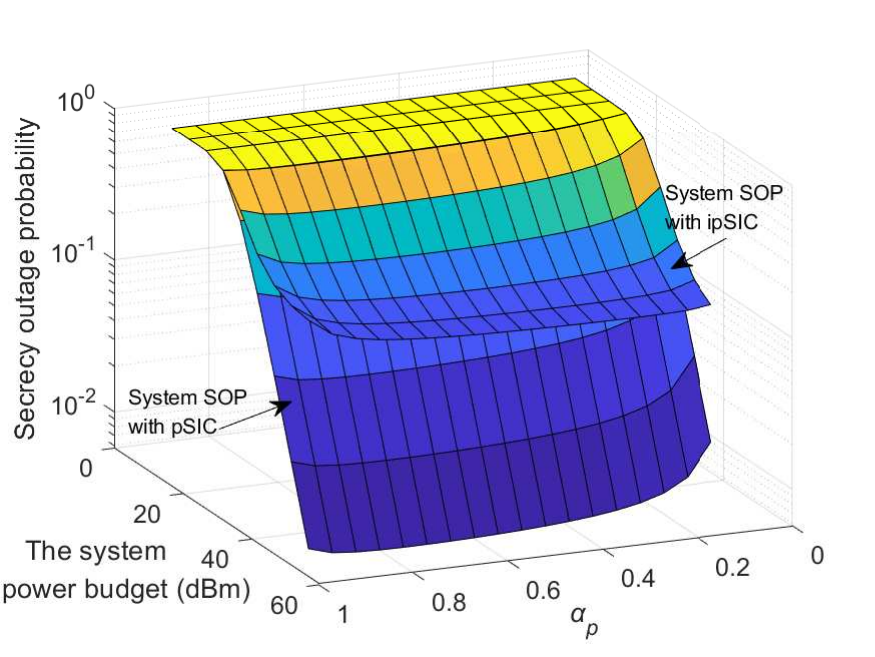}
        \caption{The SOP versus system power budget under external eavesdropping scenario, with \emph{M} = 40, \emph{P} = 2, \emph{Q} = 20 and ${\left| {{h_{ipu}}} \right|^2} = {\left| {{h_{ipe}}} \right|^2} = $ -80 dB.}
        \label{SOP_diff_powerallocation_EE}
    \end{center}
\end{figure}
Fig. \ref{SOP_diff_powerallocation_EE} plots the SOP versus system power budget under external eavesdropping scenario. We define ${\alpha _P} \in \left( {0,1} \right)$ as the power offset factor and the power allocation coefficients for user \emph{f} and user \emph{n} can be given by ${a_f} = {\alpha _P}$ and ${a_n} = 1 - {\alpha _P}$, respectively. As can be observed from the figure that the system secrecy performance of  ARIS-NOMA networks dips gradually with the increase of ${\alpha _P}$. The reason is that fairness can be achieved for both non-orthogonal users by allocating the far user \emph{f} with more power in ARIS-NOMA networks, which is helpful to preserve the secure transmission. Another observation is that the system SOP begin to climbing since ${\alpha _P}$ reaches 0.8 and even more. This is because that overlarge value of ${\alpha _P}$ will seriously encroach on the power allocated to the near user \emph{n} and break the fairness between both users. As a consequence, the system SOP start to show signs of getting worse. This suggests that it is pivotal to set suitable power allocation coefficients in  ARIS-NOMA networks to guarantee secure communication.

\begin{figure}[t!]
    \begin{center}
        \includegraphics[width=2.93in,  height=2.2in]{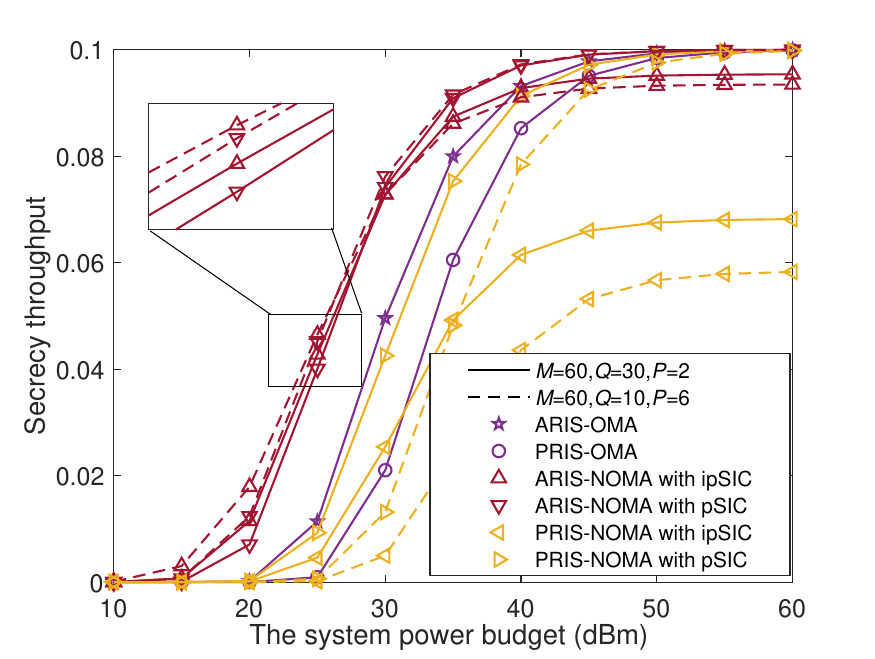}
        \caption{The system secrecy throughput versus system power budget under external eavesdropping scenario, with \emph{M} = 60, $\kappa = 20$ and ${\left| {{h_{ipu}}} \right|^2} = {\left| {{h_{ipe}}} \right|^2} = $ -80 dB.}
        \label{SST_EE}
    \end{center}
\end{figure}
Fig. \ref{SST_EE} plots the system secrecy throughput versus system power budget under external eavesdropping scenario. The curves for  ARIS-NOMA with ipSIC/pSIC can be obtained from (\ref{SOP EE n ipSIC}), (\ref{SOP EE n pSIC}), (\ref{SOP EE f}) and (\ref{SST-EE}), respectively. It is observed from the figure that ARIS-NOMA networks are capable of achieving higher system secrecy throughput than PRIS-NOMA networks before convergence. This is because the amplification function offers ARIS-NOMA superior secrecy outage performance, which leads to the larger secrecy throughput. Another observation is that the throughput of ARIS-NOMA decreases when the parameters vary from \emph{Q} = 10, \emph{P} = 6 to \emph{Q} = 30, \emph{P} = 2 at the beginning and increase in the high system power budget region. This phenomenon can be explained that setting more active components to $one$ will deepen the interference caused by thermal noise since the distortion from ipSIC is not apparent at first. However, the negative effects brought by residual interference gradually dominate with the sharp rise of the system power budget, and adjusting more elements to $one$ benefits alleviating the harmful influence of ipSIC. That is, values of \emph{Q} and \emph{P} are set largely depending on the prevailing type of interference. This kind of mutual suppression relationship between thermal noise and residual interference is also manifested in the analysis of Fig. \ref{SOP_diff_M}, which makes the gap of ARIS-NOMA less obvious compared to that of PRIS-NOMA with the variation of \emph{Q} and \emph{P}.

\begin{figure}[t!]
    \begin{center}
        \includegraphics[width=2.93in,  height=2.2in]{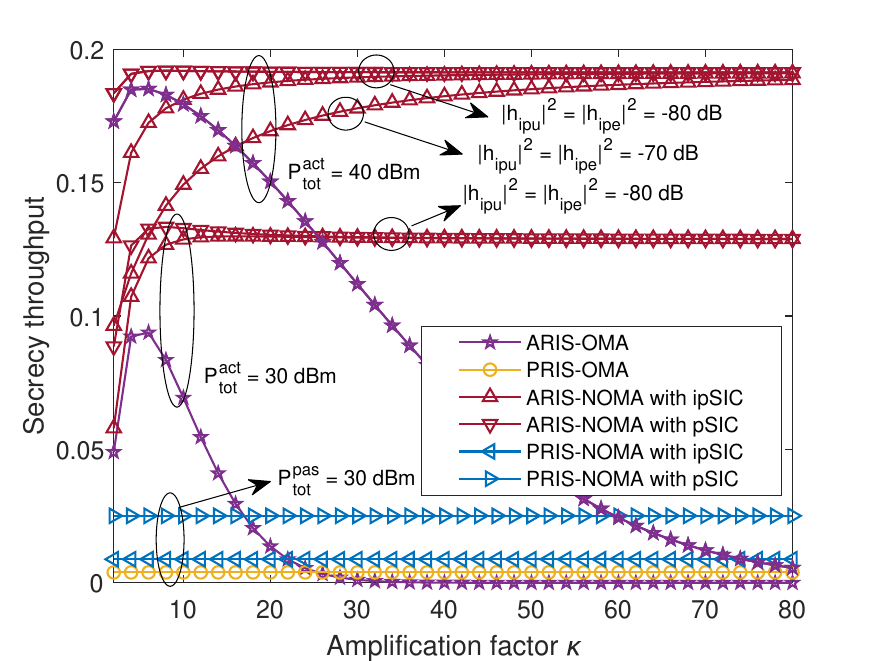}
        \caption{The system secrecy throughput versus amplification factor under external eavesdropping scenario, with \emph{M} = 40 and $R_{n}$ = $R_{f}$ = 0.1 BPCU.}
        \label{SST_EE_diff_kappa}
    \end{center}
\end{figure}
Fig. \ref{SST_EE_diff_kappa} plots the system secrecy throughput versus amplification factor under external eavesdropping scenario. One can observe from the figure that the secrecy throughput is able to realize a remarkable development via expanding the scope of system power budget. This is due to the fact that sufficient power budget is helpful to improve the received SNR at users and thus enhance the anti-eavesdropping capability, which is also in accordance with the analysis of Fig. \ref{EE_SOP_1} and Fig. \ref{EE_SOP_2}. Another observation is that the gap in terms of secrecy throughput between ipSIC and pSIC is negligible with the growth of amplification factor $\kappa$. The reason is that the effect of thermal noise on throughput performance will exceed that of residual interference and the latter can even be neglected when $\kappa$ keeps rising. This can also be utilized to explain the phenomenon that a larger $\kappa$ is required to bridge the gap caused by ipSIC when the residual interference changes from -80 dBm to the severe -70 dBm.

\subsection{Internal Wiretapping Scenario}
\begin{figure}[t!]
    \begin{center}
        \includegraphics[width=2.93in,  height=2.2in]{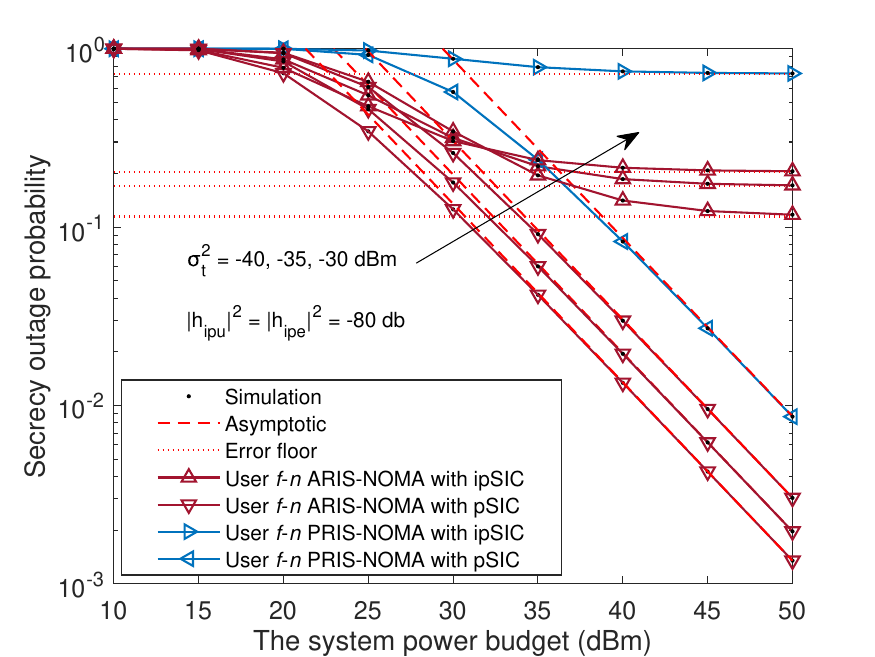}
        \caption{The SOP versus system power budget under internal eavesdropping scenario, with \emph{M} = 40, \emph{P} = 2, \emph{Q} = 20 and ${\left| {{h_{ipu}}} \right|^2} = {\left| {{h_{ipe}}} \right|^2} = $ -80 dB.}
        \label{IE_SOP}
    \end{center}
\end{figure}
Fig. \ref{IE_SOP} plots the SOP versus system power budget under internal eavesdropping scenario. The theoretical analysis curves for user \emph{n} with ipSIC and pSIC are plotted based on (\ref{SOP f n ipSIC}) and (\ref{SOP f n pSIC}), respectively, which match with the simulation results. The effectiveness of SOP expressions are also verified by the asymptotic lines. It can be seen from the figure that the secrecy performance of ARIS-NOMA networks is superior to that of PRIS-NOMA networks. The reason is that active elements can uniquely amplify the power of incident signals and thus enhance the secure transmission. One can also observe that the SOP deteriorates gradually with the increase of thermal noise generated by the active components. This can be explained that larger value of $\sigma _t^2$ impairs the received SNR at user \emph{n}, which can impose a negative effect on the secrecy performance of ARIS-NOMA networks.

\begin{figure}[t!]
    \begin{center}
        \includegraphics[width=2.93in,  height=2.2in]{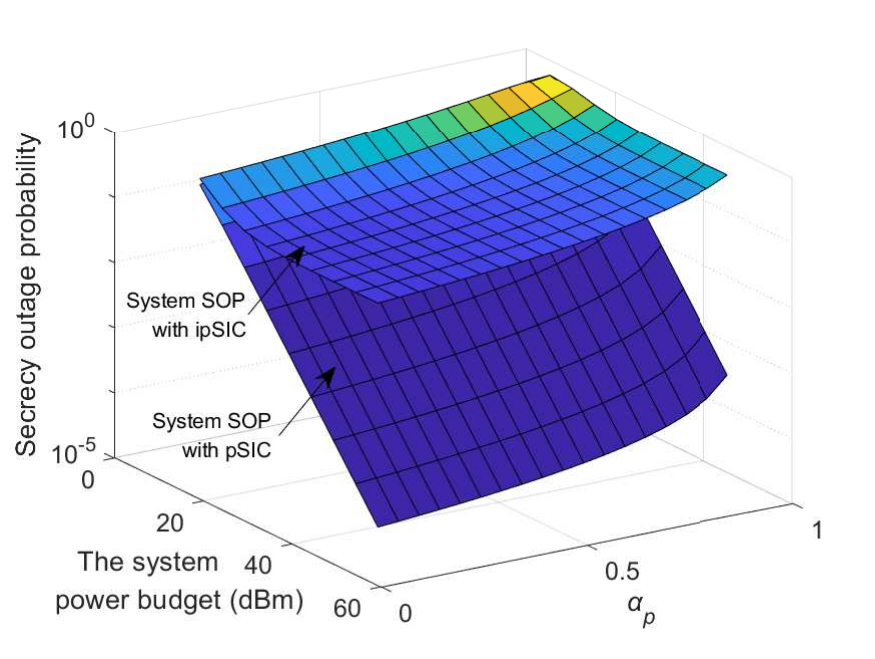}
        \caption{The SOP versus system power budget under internal eavesdropping scenario, with \emph{M} = 40, \emph{P} = 2, \emph{Q} = 20 and ${\left| {{h_{ipu}}} \right|^2} = {\left| {{h_{ipe}}} \right|^2} = $ -70 dB.}
        \label{SOP_diff_powerallocation_IE}
    \end{center}
\end{figure}
Fig. \ref{SOP_diff_powerallocation_IE} plots the SOP versus system power budget under internal eavesdropping scenario. It is can be seen from the figure that the system secrecy outage behaviours are becoming better with the decrease of ${\alpha _P}$, which is the direct opposite of the observation in Fig. \ref{SOP_diff_powerallocation_EE}. The reason is that user \emph{f} with poor channel condition is regarded as the internal Eve and a lower power allocation coefficient can impair the quality of received signals and reduce the wiretapping abilities.

\begin{figure}[t!]
    \begin{center}
        \includegraphics[width=2.93in,  height=2.2in]{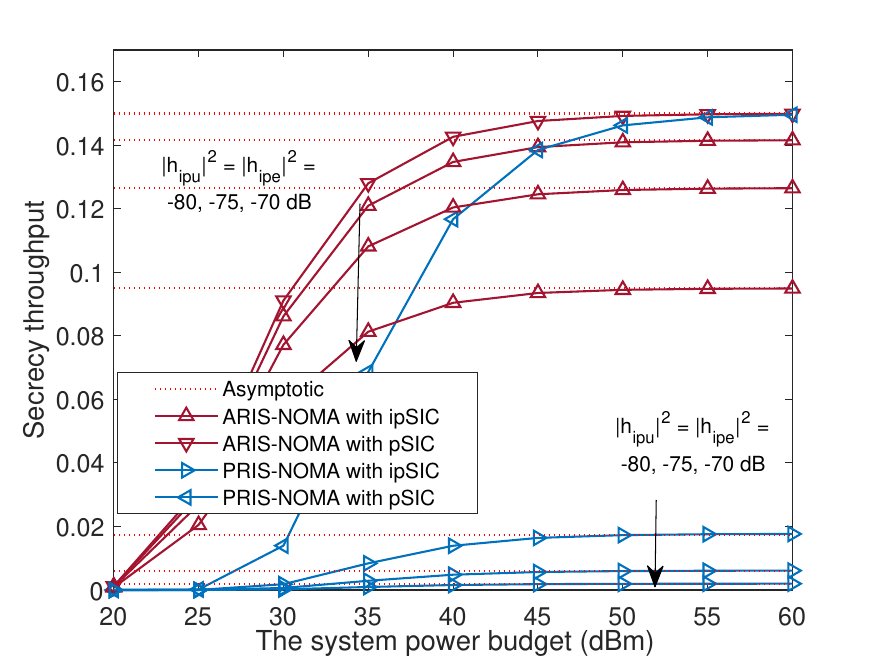}
        \caption{The system secrecy throughput versus system power budget under internal eavesdropping scenario, with \emph{M} = 40, \emph{P} = 2, \emph{Q} = 20, $R_{n}$ = 0.15 BPCU and ${\left| {{h_{ipu}}} \right|^2} = {\left| {{h_{ipe}}} \right|^2} = $ -70 dB.}
        \label{SST_IE}
    \end{center}
\end{figure}
Fig. \ref{SST_IE} plots the system secrecy throughput versus system power budget under internal eavesdropping scenario. The curves for ARIS-NOMA networks with ipSIC and pSIC can be obtained according to (\ref{SOP f n ipSIC}), (\ref{SOP f n pSIC}) and (\ref{SST-IE}), respectively. One can make the following observation from figure that ARIS-NOMA networks can always achieve superior secrecy throughput performance to PRIS-NOMA networks with arbitrary system power budget. The reason is that the additional amplification brought by active components significantly reduces the SOP as displayed in Fig. \ref{IE_SOP}, which makes it possible for ARIS-NOMA to reach higher secrecy throughput. Furthermore, we can observe that the secrecy throughput performance becomes worse with the advancement in residential interference. This phenomenon indicates that more precise hardware design should be implemented at transceivers since the grim influence of ipSIC can bring seriously impairment to the secrecy throughput performance under internal eavesdropping scenario.

\section{Conclusion}\label{Conclusion}\label{SectionV}
In this paper, the physical layer security of ARIS-NOMA networks has been discussed exhaustively by taking into consideration external and internal Eves. The typical on-off control scheme has been utilized to randomly design the phase shifts of ARIS. We have derived the closed-form expressions of SOP and secrecy throughput for ARIS-NOMA networks by exploiting ipSIC and pSIC. To reap more insights, the secrecy diversity orders of legitimate users have been obtained within high SNR region. Numerical results indicate that the secrecy outage performance of ARIS-NOMA networks exceeds that of PRIS-NOMA networks, ARIS/PRIS-OMA networks and several conventional relaying schemes, i.e., AF, HD-DF and FD-DF relay transmissions with the same system power budget. Furthermore, the setup of the system power budget and amplification factor can impose a remarkable influence on the secrecy performance of throughput.

\appendices
\section*{Appendix~A: Proof of Lemma \ref{Lemma1}} \label{AppendixA}
\renewcommand{\theequation}{A.\arabic{equation}}
\setcounter{equation}{0}
We suppose that $Z = {\left| {{\mathbf{v}}_p^H{{\mathbf{D}}_{rn}}{{\mathbf{h}}_{br}}} \right|^2}$ and the CDF of the SINR for user \emph{n} to decode its own signal with ipSIC can be given by
\begin{align}\label{a1}
{F_{\gamma _n^{ipSIC}}}\left( x \right) =& {\rm{Pr}}\left( {\frac{{{c_n}{Z}}}{{{v_n} + {\varpi}P_{BS}^{act}{{\left| {{h_{ipu}}} \right|}^2}}} < x} \right)\notag \\ =& \int_0^\infty  {{f_{{{\left| {{h_{ipu}}} \right|}^2}}}\left( y \right){F_{{Z}}}\left[ {\frac{{x\left( {{v_n} + {\varpi}P_{BS}^{act}y} \right)}}{{{c_n}}}} \right]} dy,
\end{align}where ${c_n} = {a_n}P_{BS}^{act}{\kappa ^2}$, ${v_n} = {\kappa ^2}\sigma _t^2Q{\Omega _{rn}} + {\sigma ^2}$ and ${f_{{{\left| {{h_{ipu}}} \right|}^2}}}\left( y \right) = \frac{1}{{{\Omega _{ipu}}}}{e^{ - y/{\Omega _{ipu}}}}$. Considering that each reconfigurable element in ARIS can regulate the incident signal independently, the channel variables $h_{rn}^m$ with ${m = 1, \cdots ,m, \cdots ,M}$ are not correlated with each other. According to \cite{CunhuaPanARIS}, ${{{\left\| {{\mathbf{h}}_{rn}^H{\mathbf{\Phi }}} \right\|}^2}}$ can be approximated as $Q{\Omega _{rn}}$.
 The PDF of Rayleigh cascaded channel from BS to ARIS and then to user \emph{n} can be acquired based on \cite{liu2014outage}

\begin{align}\label{a2}
{f_Z}\left( z \right) = \frac{2}{{\Gamma \left( Q \right)}}\sqrt {\frac{{{z^{Q - 1}}}}{{{{\left( {{\Omega _{br}}{\Omega _{rn}}} \right)}^{Q + 1}}}}} {K_{Q - 1}}\left( {2\sqrt {\frac{z}{{{\Omega _{br}}{\Omega _{rn}}}}} } \right).
\end{align}By applying integration operation, the CDF of Rayleigh cascaded channel can be written as
\begin{align}\label{a3}
{F_Z}\left( z \right) =& \frac{2}{{\Gamma \left( Q \right){{\left( {\sqrt {{\Omega _{br}}{\Omega _{rn}}} } \right)}^{Q + 1}}}}\notag \\ &\times\int_0^z {{x^{\frac{{Q - 1}}{2}}}{K_{Q - 1}}\left( {2\sqrt {\frac{x}{{{\Omega _{br}}{\Omega _{rn}}}}} } \right)} dx.
\end{align}Replacing \emph{x} with \emph{zy}, (\ref{a3}) can be recast as
\begin{align}\label{a4}
{F_Z}\left( z \right) =& \frac{4}{{\Gamma \left( Q \right)}}{\left( {\sqrt {\frac{z}{{{\Omega _{br}}{\Omega _{rn}}}}} } \right)^{Q + 1}}\notag \\ &\times\int_0^1 {{y^Q}{K_{Q - 1}}\left( {2\sqrt {\frac{z}{{{\Omega _{br}}{\Omega _{rn}}}}} y} \right)dy}\notag \\ \mathop  = \limits^{\left( a \right)}& 1 - \frac{2}{{\Gamma \left( Q \right)}}{\left( {\frac{z}{{{\Omega _{br}}{\Omega _{rn}}}}} \right)^{\frac{Q}{2}}}{K_Q}\left( {2\sqrt {\frac{z}{{{\Omega _{br}}{\Omega _{rn}}}}} } \right),
\end{align}where ${\left( a \right)}$ refers to \cite[Eq. (6.561.8)]{gradvstejn2000table}.
Upon substituting (\ref{a4}) into (\ref{a1}), the CDF of $\gamma _n^{ipSIC}$ can be given by
\begin{small}
\begin{align}\label{a5}
{F_{\hat \gamma _n^{ipSIC}}}\left( x \right) =& \int_0^\infty  {\frac{{e^{ - \frac{y}{{{\Omega _{ipu}}}}}}}{{{\Omega _{ipu}}}}} \left\{ {1 - \frac{2}{{\Gamma \left( Q \right)}}} \right.{\left[ {\frac{{x\left( {{v_n} + \varpi P_{BS}^{act}y} \right)}}{{{c_n}{\Omega _{br}}{\Omega _{rn}}}}} \right]^{\frac{Q}{2}}}\notag \\ &\times \left. {{K_Q}\left[ {2\sqrt {\frac{{x\left( {{v_n} + \varpi P_{BS}^{act}y} \right)}}{{{c_n}{\Omega _{br}}{\Omega _{rn}}}}} } \right]} \right\}dy.
\end{align}
\end{small}The integration in (\ref{a5}) can be solved by introducing Gauss-Laguerre quadrature formula. Specially, we set $t = y/{\Omega _{ipu}}$ and use several straightforward computations, the CDF ${F_{\gamma _n^{ipSIC}}}\left( x \right)$ can be transformed into
\begin{align}\label{a6}
{F_{\gamma _n^{ipSIC}}}\left( x \right) \approx 1 - \frac{2}{{\Gamma \left( Q \right)}}\sum\limits_{d = 1}^D {{G_d}{{\left( {{\Xi _n}x} \right)}^{\frac{Q}{2}}}} {K_Q}\left( {2\sqrt {{\Xi _n}x} } \right).
\end{align}
The proof is completed.

\section*{Appendix~B: Proof of Lemma \ref{Lemma2} } \label{AppendixB}
\renewcommand{\theequation}{B.\arabic{equation}}
\setcounter{equation}{0}
We assume that $Z = {\left| {{\mathbf{v}}_p^H{{\mathbf{D}}_{re}}{{\mathbf{h}}_{br}}} \right|^2}$, the PDF of cascaded eavesdropping channels can be shown as follows according to (\ref{a2})
\begin{small}
\begin{align}\label{b1}
{f_{Z}}\left( z \right) = \frac{{2{z^{\frac{{Q - 1}}{2}}}}}{{\Gamma \left( Q \right){{\left( {\sqrt {{\Omega _{br}}{\Omega _{re}}} } \right)}^{Q + 1}}}}{K_{Q - 1}}\left( {2\sqrt {\frac{z}{{{\Omega _{br}}{\Omega _{re}}}}} } \right).
\end{align}
\end{small}
Therefore, the expectation value of \emph{Z} can be given by
\begin{small}
\begin{align}\label{b2}
\mathbb{E}\left( {Z} \right) &= \int_0^\infty  {z{f_{Z}}\left( z \right)dz}\notag \\  &= \frac{{2{{\left[ {\Gamma \left( Q \right)} \right]}^{ - 1}}}}{{{{\left( {\sqrt {{\Omega _{br}}{\Omega _{re}}} } \right)}^{Q + 1}}}}\int_0^\infty  {{z^{\frac{{Q + 1}}{2}}}} {K_{Q - 1}}\left( {2\sqrt {\frac{z}{{{\Omega _{br}}{\Omega _{re}}}}} } \right)dz.
\end{align}
\end{small}
Upon introducing $z = {x^2}$, (\ref{b2}) can be transformed into
\begin{small}
\begin{align}\label{b3}
\mathbb{E}\left( {Z} \right) = \frac{{4{{\left[ {\Gamma \left( Q \right)} \right]}^{ - 1}}}}{{{{\left( {\sqrt {{\Omega _{br}}{\Omega _{re}}} } \right)}^{Q + 1}}}}\int_0^\infty  {{x^{Q + 2}}}  {K_{Q - 1}}\left( {2x\sqrt {\frac{1}{{{\Omega _{br}}{\Omega _{re}}}}} } \right)dx.
\end{align}
\end{small}
Referring to \cite[Eq. (6.561.16)]{gradvstejn2000table}, the integral in (\ref{b3}) can be written as
\begin{small}
\begin{align}\label{b4}
\int_0^\infty  {{x^\mu }{K_v}\left( {ax} \right)dx}  = {2^{\mu  - 1}}{a^{ - \mu  - 1}}  \Gamma \left( {\frac{{1 + \mu  + v}}{2}} \right)\Gamma \left( {\frac{{1 + \mu  - v}}{2}} \right).
\end{align}
\end{small}
Upon substituting (\ref{b4}) into (\ref{b3}), we can are able to approximately represent ${{\left| {{\mathbf{v}}_p^H{{\mathbf{D}}_{re}}{{\mathbf{h}}_{br}}} \right|}^2}$ as $\mathbb{E}\left( {Z} \right) = Q{\Omega _{br}}{\Omega _{re}}$. The approximate accuracy has been verified by Monte Carlo simulation.

The SOP for external Eve to wiretap user \emph{n}'s information with ipSIC is expressed as follows referring to (\ref{SINR usern hat}), (\ref{the SINR EE n new}) and (\ref{SOP1}).
\begin{small}
\begin{align}\label{b5}
\begin{gathered}
  P_n^{ipSIC}\left( {{R_n}} \right) = {\rm{Pr}}\left[ {\hat \gamma _n^{ipSIC} < {2^{{R_n}}}\left( {1 + \hat \gamma _{e \to n}^{ipSIC}} \right) - 1} \right] \hfill \\
   \approx \int_0^\infty  {{f_{{{\left| {{h_{ipe}}} \right|}^2}}}\left( y \right)} {F_{\hat \gamma _n^{ipSIC}}}\left[ {{2^{{R_n}}}\left( {1 + \Upsilon } \right) - 1} \right]dy, \hfill \\
\end{gathered}
\end{align}
\end{small}where $\Upsilon  = \frac{{{a_n}{\rho _e}{\kappa ^2}Q{\Omega _{br}}{\Omega _{re}}}}{{\left( {{\kappa ^2}\sigma _t^2Q{\Omega _{re}}} \right)/\sigma _e^2 + {\varpi }{\rho _e}y + 1}}$.

Given that ${f_{{{\left| {{h_{ipe}}} \right|}^2}}}\left( y \right) = \frac{1}{{{\Omega _{ipe}}}}{e^{ - \frac{y}{{{\Omega _{ipe}}}}}}$, (\ref{b5}) can be recast as
\begin{small}
\begin{align}\label{b6}
P_n^{ipSIC}\left( {{R_n}} \right) = \frac{1}{{{\Omega _{ipe}}}}\int_0^\infty  {{e^{ - \frac{y}{{{\Omega _{ipe}}}}}}} {F_{\hat \gamma _n^{ipSIC}}}\left[ {{2^{{R_n}}}\left( {1 + \Upsilon } \right) - 1} \right]dy.
\end{align}
\end{small}After replacing $y/{\Omega _{ipe}}$ with \emph{t}, the integration can be solved with the help of Gauss-Laguerre quadrature formula and Lemma \ref{Lemma1} which completes the proof.

\bibliographystyle{IEEEtran}
\bibliography{mybib}

\end{document}